\newcommand{\ssigma}{\boldsymbol{\sigma}}
\newcommand{\rrho}{\hat{\boldsymbol{\rho}}}
\newcommand{\nnu}{\hat{\boldsymbol{\nu}}}
\newcommand{\Tr}{\mathrm{Tr}}
\newcommand{\Id}{\mathds{I}}
\newcommand{\SL}{\mathrm{SL}}
\newcommand{\diag}{\mathrm{diag}}
\newcommand{\reals}{\mathds{R}}
\newcommand{\complex}{\mathds{C}}
\theoremstyle{plain}
\newtheorem{thm}{Theorem}
\theoremstyle{definition}
\newtheorem{defn}{Definition}
\newtheorem{propos}{Proposition}
\begin{document}
\begin{center}{\Large \textbf{
Nonclassical steering and the Gaussian steering triangoloids
}}
\end{center}
\begin{center}
Massimo Frigerio\textsuperscript{1,2},
Stefano Olivares \textsuperscript{2,3},
Matteo G. A. Paris\textsuperscript{2,3,*}
\end{center}
\begin{center}
{\bf 1} Dipartimento di Fisica {\em Giuseppe Occhialini} dell'Universit\`a degli Studi di Milano-Bicocca, I-20126 Milano, Italy \\
{\bf 2} Dipartimento di Fisica {\em Aldo 
Pontremoli} dell'Universit\`a degli Studi di Milano, I-20133 Milano, Italy
\\ 
{\bf 3} I.N.F.N. Sezione di Milano, I-20133 Milano, Italy \\
* matteo.paris@fisica.unimi.it
\end{center}

\begin{center}
\today
\end{center}
\section*{Abstract}
{\bf
Nonclassicality according to the singularity or negativity of the 
Glauber P-function is a powerful resource in quantum information, 
with relevant implications in quantum optics. In a Gaussian setting, 
and for a system of two modes, we explore how P-nonclassicality may be conditionally generated or influenced on one mode by Gaussian measurements on the other mode. Starting from the class of 
two-mode squeezed thermal states (TMST), we introduce the 
notion of \emph{nonclassical steering} (NS) and the graphical 
tool of Gaussian triangoloids. In particular, we derive a necessary 
and sufficient condition for a TMST to be nonclassically steerable, 
and  show that entanglement is only necessary. We also apply our 
criterion to noisy propagation of a twin-beam state, and evaluate 
the time after which NS is no longer achievable. 
We then generalize the notion of NS to the full set of Gaussian states
of two modes, and recognize that it may occur in a {\em weak form}, 
which does not imply entanglement, and in a {\em strong form} that 
implies EPR-steerability and, a fortiori, also entanglement. These 
two types of NS coincide exactly for TMSTs, and they merge with 
the previously known notion of EPR steering. By the same token, we recognize a new operational interpretation of P-nonclassicality: it is the distinctive property that allows one-party entanglement verification on TMSTs.
}
\vspace{10pt}
\noindent\rule{\textwidth}{1pt}
\tableofcontents\thispagestyle{fancy}
\noindent\rule{\textwidth}{1pt}
\vspace{10pt}
\section{\label{sec:introduction} Introduction}
The upsetting consequences of quantum correlations have attracted 
theoretical efforts since the early days of quantum mechanics. The 
first authors to pinpoint the new features of these correlations 
were probably Einstein, Podolsky and Rosen \cite{EPR}. They 
considered two parties, that we will call Alice and Bob for
 our convenience, sharing a pair of quantum systems described 
 altogether by the entangled pure state:
\[   \vert \psi \rangle\rangle \ \ = \ \ \sum_{a} c_{a} \vert a \rangle_{1} \otimes \vert a \rangle_{2} \ \ = \ \ \sum_{\alpha} d_{\alpha} \vert \alpha \rangle_{1} \otimes \vert \alpha \rangle_{2}       \]
where $\{ \vert a \rangle_{j} \} $ and $\{ \vert \alpha \rangle_{j} \} $ are two distinct, orthonormal bases of the $j$-th system, $j=1,2$. They noted that, according to quantum mechanics, Alice can choose to perform a projective measurement on her system either in the $\{ \vert a \rangle_{1} \}$  basis or in the $\{ \vert \alpha \rangle_{1} \}$ basis, thereby collapsing the state of Bob's system into distinguishable quantum states (in $\{ \vert a \rangle_{2} \}$  or in $\{ \vert \alpha \rangle_{2} \}$, respectively), a phenomenon later termed \emph{steering} by E. Schr\"{o}dinger \cite{schr35}. Refusing to abandon their notion of locality and observing that 
$\{ \vert a\rangle_{j} \}$ and $\{ \vert \alpha \rangle_{j} \} $, $j=1,2$ 
may be chosen to be the eigenstates of two non-commuting observables, EPR concluded that there was more to be known on Bob's system than what was provided by the quantum state's description \footnote{We shall mean 'the quantum state of both systems' here, since we now know that the quantum state of just the second system is even less informative.}. In other words, they concluded that the quantum mechanical description of a physical system must be incomplete, if locality holds at the level of quantum states. \textcolor{red}{The modern point of view is that locality does not hold for quantum states, but in such a way that \emph{causality}, namely signal-locality, which is also the actual essential ingredient to special relativity, is still preserved. In fact, Bob cannot detect on his own the influence produced on his quantum state, in compliance with the no-signaling Theorem \cite{ghir80,ghir83,ghir88,eberh89}}. On the other hand, if Alice communicates her choice of measurement to Bob and they repeat the experiment many times, starting with the same entangled state each time, Bob can now check that Alice was indeed able to \emph{steer} his state. Since, for bipartite pure states, being entangled is precisely equivalent to being unfactorizable, we conclude that \emph{quantum steering} is possible with a given pure bipartite state {\em if and only if} it is entangled.
\par
At a variance with the pure case, mixed states can exhibit fully classical correlations, thus failing to be factorized even without entanglement. A generalized definition of entanglement for them was provided in \cite{wern89}, and it is based on the impossibility to construct an entangled state starting from a factorized one and using just local operations and classical communication (LOCC). There, the author also considered another celebrated manifestation of quantum correlations, the violation of Bell's inequality \cite{bell,CHSH}, which was known to be possible with all and only entangled states in the pure case; Ref.~\cite{wern89} showed that among mixed states, instead, only a strict subset of entangled states allows for a violation of such inequality. It started to become apparent, then, that there is a true hierarchy of quantum correlations. 
\par
On this line, but only much later, the concept of quantum steering received a general formulation in \cite{wise07,jone07}. Since classical correlations in bipartite mixed states can mock the influence on one party by measurements on the other one, the idea of the authors was to declare that, given a shared bipartite state, Alice can steer Bob's state if she can condition his quantum state into different ensembles, in such a way that he cannot explain such an influence using a local hidden-variable model and assuming just classical correlations with Alice, and therefore he becomes convinced that the shared state was entangled. If there is a choice of quantum measurements on her party allowing Alice to convince Bob that the shared state was entangled, the state is called \emph{steerable by Alice}. Contrary to the pure case, steering becomes a truly asymmetric property for mixed states, and it was shown \cite{jone07} that one-way steerability is a stronger condition than entanglement, but a weaker condition than violation of Bell's inequality in general. The definition of quantum steering was also specialized to the case of Gaussian states of continuous-variable (CV) quantum systems, and we will refer to this notion as \emph{EPR steering}. Steering is now widely considered a fundamental resource for quantum information tasks \cite{bran12,heAd15,gom15,woll16,xia17,deng17}, for example in one-sided device-independent quantum key distribution (QKD) \cite{bran12}, and many criteria for its detection have been explored \cite{kog07,schn13,cwlee13,ji16}.
\par
In addition to quantum correlations, a wealth of
other concepts concerning the nonclassical character of quantum states have been put forward \cite{ferrpar12}. The nonclassicality of a CV quantum state $\rrho$ 
is often determined by the behaviour of its Glauber P-function \cite{glauber63,sudarsh63,cahillglauber69,glauber69}, which amounts to its expansion  onto coherent states $\vert \alpha \rangle$ 
($\alpha \in \complex$) according to:
\begin{equation} 
\rrho  \ \  =  \ \ \int_{\complex} \mathrm{d}^{2}  
\alpha  \ P \left[ \rrho \right] \left( \alpha \right)  \vert \alpha 
\rangle \langle \alpha \vert \;
\end{equation} 
A major reason for the wide use of the P-function stems from its connection with experimentally accessible quantities, so that it leads to the most \emph{physically inspired} notion of nonclassicality. It is known to be necessary for antibunching and sub-Poissonian photon statistics \cite{mand86} in quantum optics, among others, whereas being \emph{classical} according to the P-function implies the empirical adequacy of Maxwell's Equations in the phenomenological description of the 
corresponding state of light. On a practical level, the more nonclassical a state the harder to fabricate it with optical equipment \cite{brau05,alb16} (as in the case of highly squeezed states), and we may therefore say that P-nonclassicality  has a \emph{resource} character \cite{yadbind18,kwtanvolk19,kjac20}.
\par
In this article, we shall explore the possibility of \emph{steering} nonclassicality, in the setting of two-mode Gaussian states. In particular, we focus on Gaussian measurements on one of the modes and examine the effects on the nonclassicality of the state of the other mode, conditioned on the outcome of the measurement.
\par
This paper is structured as follows. In Section \ref{sec:general} we set our notation for Gaussian states and P-nonclassicality. In Section \ref{sec:TMST} we consider the case of two-mode squeezed thermal states (TMSTs), introducing the concept of \emph{nonclassical steering} and studying its relation with entanglement and its asymmetric behaviour. We also introduce \emph{triangoloid plots}, a graphical tool that will guide our analysis and lead us to anticipate one of the main results: projective measurements on field quadratures are optimal, among all Gaussian measurements, to remotely influence nonclassicality. In Section \ref{sec:noisy} we discuss the situation of a twin-beam state (TWB), whose mode we wish to steer interacts with a noisy, thermal environment. We show that the resulting state is a generic TMST and we derive the maximum propagation time after which nonclassical steering is no longer viable, comparing it to the time needed to destroy all initial entanglement. In Section \ref{sec:generaliz} we generalize our results to all Gaussian states of two modes, explaining the necessity to distinguish between \emph{weak} and \emph{strong} nonclassical steering. In particular, weak nonclassical steering will be seen to be independent on entanglement, therefore we examine its relation with Gaussian quantum discord. We conclude with a comparison between nonclassical steering and EPR steering, showing that they coincide for TMST states, with possible practical implications in quantum key distribution.
\section{\label{sec:general} Gaussian states and nonclassicality}

\subsection{Phase space formalism and notation for Gaussian states}
We consider the Fock space $\mathcal{F}^{(n)} = \bigotimes_{k=1}^{n} \mathcal{F}_{k}$ that is constructed as the tensor product of $n$ single-mode Fock spaces $\mathcal{F}_{k}$, each generated by the usual creation operators $\hat{a}^{\dagger}_{k}$ acting on the vacuum $\vert 0 \rangle_{k}$ of the respective mode, such that together with the corresponding annihilation operators $\hat{a}_{k}$ they satisfy the standard commutation relations for bosons. From these, we can define the ordinary conjugated canonical variables:
\[  \hat{q}_{k} \ \coloneqq \ \dfrac{\hat{a}_{k} + \hat{a}^{\dagger}_{k}}{\sqrt{2}} \  , \ \ \ \ \ \hat{p}_{k} \ \coloneqq \ \dfrac{\hat{a}_{k} - \hat{a}^{\dagger}_{k}}{i \sqrt{2}}        \]
which can be collected in the \emph{quadrature vector} $\hat{\mathbf{R}} = (\hat{q}_{1}, \hat{p}_{1},...,\hat{q}_{n}, \hat{p}_{n})^{T}$, so that the canonical commutation relations are compactly written as:
\begin{equation}
\label{eqcanoncomm}
    [ \hat{R}_{j}, \hat{R}_{k} ]  \ =  \ i \Omega_{jk}
\end{equation}
\begin{equation}
\label{eqdefomega}
 \boldsymbol{\Omega} \ \coloneqq \ \bigoplus_{k=1}^{n} \boldsymbol{\omega} \  , \ \ \ \ \ \boldsymbol{\omega} \ \coloneqq \ \left( \begin{array}{cc} 0 & 1 \\ -1 & 0 \end{array} \right)
\end{equation}
Given a state $\rrho$, i.e. a trace-class, positive semidefinite, bounded linear operator from $\mathcal{F}^{(n)}$ to itself, we define its \emph{characteristic function} \cite{FOPGauss,olivares} as:
\begin{equation}
    \label{eqdefcharfunct}
    \chi \left[ \rrho   \right] \left( \Lambda \right) \ \ \coloneqq \ \ \Tr [ \rrho \ \hat{\mathbf{D}} \left( \Lambda \right)   ]
\end{equation}
where we introduced the \emph{displacement operator}:
\begin{equation}
    \label{eqdefdispl}
\hat{\mathbf{D}} \left( \Lambda \right) \ \ \coloneqq \ \ \exp{ [  -i \Lambda^{T} \boldsymbol{\Omega} \hat{\mathbf{R}} ] } \ \ = \ \ \bigotimes_{k=1}^{n} e^{  \lambda_{k} \hat{a}^{\dagger}_{k} - \lambda^{*}_{k} \hat{a}_{k} }    
\end{equation}
with $\Lambda \ \ = \ \ \left( \mathrm{a}_{1}, \mathrm{b}_{1},..., \mathrm{a}_{n}, \mathrm{b}_{n} \right)^{T}$ and $\lambda_{k} =  \frac{1}{\sqrt{2}} \left( \mathrm{a}_{k} + i \mathrm{b}_{k} \right)$. 
The Wigner function can be expressed in terms of the characteristic function as:
\begin{equation}
    \label{eqchartowigner}
    W \left[ \rrho  \right] \left( X \right) \  =  \ \int_{\reals^{2n}} \frac{ \mathrm{d}^{2n} \Lambda }{(2\pi^2)^{n}} \  e^{i \Lambda^{T} \boldsymbol{\Omega} X }  \chi \left[ \rrho \right] \left( \Lambda \right)
\end{equation}
We say that $\rrho$ is a \emph{Gaussian state} of $n$ modes if its Wigner function is a Gaussian function \footnote{In that case, it is immediate to check that $\chi [ \rrho]$ is a Gaussian function too.} on a $2n$-dimensional phase space, namely:
\begin{equation}
\label{eq:wignergauss}
 W \left[ \rrho \right] \left( X \right) \ \ = \ \ \frac{1}{\pi^{n} \sqrt{\det [\ssigma]}} e^{   - \frac{1}{2} \left(  X - \langle \hat{\boldsymbol{R}} \rangle \right)^{T}  \ssigma^{-1} \left( X - \langle \hat{\boldsymbol{R}} \rangle \right)  }
\end{equation}
where $\langle \hat{\boldsymbol{R}} \rangle = \Tr [ \rrho \hat{\boldsymbol{R}} ]$ is the \emph{first-moments vector} and:
\begin{equation}
    \left[ \boldsymbol{\sigma} \right]_{jk} \ \ \ = \ \ \ \frac{1}{2} \langle  \hat{R}_{j} \hat{R}_{k} + \hat{R}_{k} \hat{R}_{j} \rangle \ - \ \langle \hat{R}_{j} \rangle \langle \hat{R}_{k} \rangle 
\end{equation}
is the \emph{covariance matrix} (CM) of the state, and it is a positive semidefinite, symmetric matrix. Moreover, since it encodes the covariances for the expectation values of conjugate canonical observables, $\ssigma$ should fulfill the uncertainty relations (UR) that are valid for any physical state $\rrho$, and they take the following form on phase space \cite{sera07}:
\begin{equation}
    \label{eq:uncert}
    \boldsymbol{\sigma} \ + \ \frac{i}{2} \boldsymbol{\Omega} \ \geq \ 0 
\end{equation}
It is important to stress that Ineq.~~(\ref{eq:uncert}) is automatically true for any $\ssigma$ derived from the Wigner function of a Gaussian state $\rrho$, whereas it has to be imposed on $\ssigma$ if a \emph{physical} Gaussian state $\rrho$ has to be defined from its Wigner function. In that case, any Gaussian function whose CM $\ssigma$ is symmetric, with $\ssigma \geq 0$ and fulfilling Ineq.~~(\ref{eq:uncert}), is the Wigner function of some physical Gaussian state $\rrho$.

\subsection{Nonclassicality}
The Glauber P-function introduced before is a member of a continuous family of phase
space quasiprobability distributions, known as $s$-ordered Wigner functions 
and defined according to \cite{cahillglauber69,lee95}:
\begin{equation}
\label{eq:wignerS}
    W_{s} \left[ \rrho  \right] \left( X \right) \ = \   \int_{\reals^{n}} \frac{d^{2n} \Lambda}{(2\pi^2)^{n}} \ \mathrm{e}^{ \frac{1}{4} s \vert \Lambda \vert^{2}  +  i \Lambda^{T} \boldsymbol{\Omega}  X }  \chi \left[ \rrho \right] \left( \Lambda \right)
\end{equation}
for $s \in [ - 1 , 1]$. With $s = 0$ we recover the Wigner function. 
Instead, the case $s= 1$, which is the most singular of the family and can behave even more singularly than a tempered distribution, corresponds precisely to the P-function. A CV quantum state $\rrho$ is termed \emph{nonclassical} \cite{mand86,lee91,lutk95} whenever its P-function is not positive semidefinite \cite{daman18} and/or it is more singular than a delta distribution. One can also introduce the so-called \emph{nonclassical depth} $\mathfrak{T}[\rrho]$ of a CV state $\rrho$ to quantify its nonclassicality:
\begin{equation}
\label{eq:defncldepth}
\mathfrak{T}[\rrho] \ \coloneqq \  \dfrac{1- s_{m}}{2}
\end{equation}
where $s_{m}$ is the largest real number such that $W_{s} \left[ \rrho \right] (X)$ is nonsingular and non-negative $\forall s < s_{m}$. In terms of $\mathfrak{T}[\rrho]$, we may say that $\rrho$ is nonclassical if $\mathfrak{T}[\rrho] > 0$ and classical if $\mathfrak{T}[\rrho]=0$. According to this definition, coherent states are the only classical pure states \cite{hill85}, while number states are highly nonclassical, with $\vert n \rangle$ having a P-function proportional to the $n$-th derivative of the delta distribution \cite{mand86}. 
\par
We should also mention that the Wigner function is often regarded as the closest approach to a classical description of a quantum state on phase space: it is always a nonsingular, normalized function, but for certain quantum states it may be not everywhere positive on phase space. For this reason, \emph{Wigner negativity}, i.e. having a Wigner function that attains negative values in at least some regions of phase-space, can be encountered as an alternative definition of nonclassicality. However, Wigner negativity always implies P-nonclassicality, whereas a nonsingular, non-negative P-function guarantees a non-negative Wigner function. Furthermore, Gaussian states cannot exhibit Wigner negativity by definition, but Gaussian squeezed states are known to possess nonclassical features \cite{hill89,ctlee90}. We take these considerations as further reasons to back up our choice of nonclassicality (see also \cite{vogelsperl14}). \\

\par
Let us look in greater detail at the definition of P-nonclassicality for Gaussian states \cite{pmar02}. Since, for a Gaussian state $\rrho$, by definition $ \chi [ \rrho ] ( \Lambda )$ is a Gaussian function on phase space, it is straightforward to conclude from Eq.~(\ref{eq:wignergauss}) and Eq.~(\ref{eq:wignerS}) that $\rrho$ is nonclassical if and only if the least eigenvalue $\lambda_{-}$ of its CM $\ssigma$ is smaller than $\frac{1}{2}$. In that case, the nonclassical depth of $\rrho$ is given by:
\begin{equation}
\label{eq:ncldepthgauss}
    \mathfrak{T}[\rrho] \ \ = \ \ \frac{1}{2} - \lambda_{-} 
\end{equation}
Among classical Gaussian states we find all the (displaced) thermal states, including coherent states, while squeezed vacuum states are always nonclassical. In fact, squeezing is essentially the only source of P-nonclassicality in the Gaussian landscape.

\section{\label{sec:TMST} Nonclassical steering with TMST states}

\subsection{Gaussian measurements and conditional states}
Our first goal will be to describe the single-mode Gaussian states that can be prepared on Alice's mode, by Gaussian measurements on Bob's mode of a generic two-mode Gaussian state, and classical communication of the outcome.\\

A Gaussian measurement is implemented at the mathematical level by a positive operator-valued measure (POVM)  $\{ \hat{\boldsymbol{\Pi}}_{\alpha} \}_\alpha$ whose \emph{effects} have Gaussian Wigner functions. In the single-mode case, we have:
\begin{equation}
    \label{eqPOVM}
    \hat{\boldsymbol{\Pi}}_{\alpha} \ = \ \frac{1}{\pi} \ \hat{\mathbf{D}} (\alpha)  \rrho_{M} \hat{\mathbf{D}}^{\dagger}(\alpha) ,
\end{equation} 
where $\hat{\mathbf{D}} (\alpha) \ = \ e^{\alpha \hat{a}^{\dagger} - \alpha^{*} \hat{a} }$, $\alpha \in \complex$, and $\rrho_{M}$ is a single-mode Gaussian state with zero first-moments vector and a CM $\ssigma_{M}$, that can always be written in the following form \footnote{The option to choose such a parametrization is justified by the well-known result that any single-mode Gaussian state $\rrho_{M}$ with zero first-moments vector is a single-mode squeezed thermal state.}:
\begin{subequations}
\begin{align} 
\label{eq:CMmeas}\ssigma_{M} \ &= \ \frac{1}{2  \mu \mu_{s} } \left( 
\begin{array}{cc} 1 + \kappa_s \cos \phi & - \kappa_s\sin \phi \\ - \kappa_s \sin \phi &  1 -\kappa_s \cos \phi \end{array} \right)\\[1ex]
    \mu \ &= \ \Tr[\rrho_{M}^2]  \ \in \ [0, 1] \\[1ex]
    \mu_{s} \ &= \  \dfrac{1}{ 1 + 2 \sinh^{2} r_{m}}  \ \in \ [0, 1]
\end{align}
\end{subequations}
where $\mu$ is the purity of $\rrho_{M}$, $\mu_{s}$ is the (single-mode) squeezing purity parameter and we introduced $\kappa_s =\sqrt{1 - {\mu_{s}}^{2}}$ for notational convenience. Here $r_{m}$ represents the single-mode squeezing parameter whereas $\phi \in [ 0, 2 \pi )$ is the squeezing phase.\\

\par
Let us now consider such a measurement performed by Bob on the second mode of a generic Gaussian state $\rrho_{AB}$ of two modes. The probability of getting the outcome $\alpha$ is:
\begin{equation}
 p_{\alpha} \ \ = \ \ \Tr_{AB} \left[  \rrho_{AB} \left( \Id_{A} \otimes \hat{\boldsymbol{\Pi}}_{\alpha} \right)  \right]
\end{equation}
where $\Id_{A}$ is the identity operator on the Hilbert space of Alice's mode. If now Bob communicates $\alpha$ to Alice, she can update the quantum state she uses to describe her mode, $ \rrho_{A}  =  \Tr_{B} \left[ \rrho_{AB} \right]$, according to:
\begin{subequations}
\begin{align}
    \rrho_{A}   \to
     \rrho_{A}^{(\alpha)} &= \ \Tr_{B} [ \rrho_{AB}^{(\alpha)} ] \\[1ex]
     &=  \frac{1}{p_{\alpha}} \Tr_{B} \left[  \rrho_{AB} \left( \Id_{A} \otimes \hat{\boldsymbol{\Pi}}_{\alpha} \right)  \right]  
\end{align}
\end{subequations}
where $\rrho_{AB}^{(\alpha)}$ is the quantum state of the two modes after the Gaussian measurement on the second mode resulted in the outcome $\alpha$.
We will call $\rrho_{A}^{(\alpha)}$ the \emph{conditional state} of Alice's mode. 
Notice that, in order to specify $\rrho_{A}^{(\alpha)}$, one doesn't need the decomposition of the Gaussian POVM into measurement operators, but it suffices to know the effects $ \hat{\boldsymbol{\Pi}}_{\alpha} $ (unlike for the calculation of $\rrho_{AB}^{(\alpha)}$).\\

\par
The state $\rrho_{A}^{(\alpha)}$ is still a Gaussian state \cite{olivares,ESP02,GC02}, and if we write the CM $\ssigma$ of the initial state $\rrho_{AB}$ in block form according to:
\begin{equation}
\label{eq:blockform}
     \ssigma \  = \ \left(  \begin{array}{cc} \mathbf{A} & \mathbf{C} \\ \mathbf{C}^{T} & \mathbf{B} \end{array}         \right)
\end{equation}
then the conditional CM $\ssigma_{A}^{(\alpha)}$ of $\rrho_{A}^{(\alpha)}$ is the Schur complement \cite{matrx} of $\mathbf{B} + \ssigma_{M}$ in $\ssigma$:
\begin{equation}
\label{eq:condCM}
    \ssigma_{A}^{(\alpha)} \  = \ \ssigma / ( \mathbf{B} + \ssigma_{M}) \ = \  \mathbf{A} - \mathbf{C}^{T} \left( \mathbf{B} + \ssigma_{M} \right)^{-1} \mathbf{C}
\end{equation}
where $\ssigma_{M}$ is the CM of the POVM. The first-moments vector of the conditional state can also be calculated, but we do not need it because it doesn't influence the nonclassicality of $\rrho_{A}^{(\alpha)}$. \\

The crucial point here is that the conditional CM $\ssigma_{A}^{(\alpha)}$ \emph{does not depend on the outcome $\alpha$} that Bob observed. This means that the nonclassical properties of the conditional state $\rrho_{A}^{(\alpha)}$ of Alice's mode are completely specified by the CM $\ssigma$ of the initial state and the choice of Gaussian measurement made by Bob (which amounts to fixing $\ssigma_{M}$). However, one should keep in mind that Alice couldn't check that the CM of her mode after Bob's measurement is $\ssigma_{A}^{(\alpha)}$, unless he reports the observed value of $\alpha$ to her: in other words, she needs to know the updated first-moments vector of her mode.
Since $\ssigma_{A}^{(\alpha)}$ doesn't really depend on $\alpha$, we shall rename it $\ssigma_{A}^{c}$ from now on, to distinguish it from the \emph{unconditional} CM $\ssigma_{A}$ of Alice's state before the measurement, $\rrho_{A} = \Tr_{B} \left[ \rrho_{AB} \right]$. Since $\ssigma_{A}^{c}$ is a single-mode CM, it too can be recast in the form of Eq.~(\ref{eq:CMmeas}):
\begin{equation} 
\label{eq:CMcondpar} \ssigma_{A}^{c} \ = \ \frac{1}{2  \mu_{c} \mu_{sc} } \left( 
\begin{array}{cc} 1 + \kappa_{sc} \cos \phi_{c} & - \kappa_{sc} \sin \phi_{c} \\ - \kappa_{sc} \sin \phi_{c} &  1 -\kappa_{sc} \cos \phi_{c} \end{array} \right) 
\end{equation}\\
with $\mu_{c} = \Tr [ ( \rrho_{A}^{(\alpha)} )^{2} ]$ and $\kappa_{sc} =\sqrt{1 - {\mu_{sc}}^{2}}$ as before.

\par
We could also calculate the eigenvalues of $\ssigma_{A}^{c}$ from Eq.~(\ref{eq:CMcondpar}):
\begin{equation}
\label{eq:eigenvcondCM}
    \lambda_{\pm} \ = \ \frac{ 1 \pm \kappa_{sc} }{2 \mu_{c} \mu_{sc}} 
\end{equation}
Thus, recalling the the nonclassical depth (\ref{eq:ncldepthgauss}), we can assert that a single-mode Gaussian state is nonclassical if and only if:
\begin{equation}
\label{eq:nonclreg}
    \lambda_{-}  =  \frac{ 1 - \kappa_{sc} }{2 \mu_{c} \mu_{sc}}  < \frac{1}{2} \ \ \ \ \implies \ \ \ \ {\mu_{sc}}  <  \frac{ 2  \mu_{c}}{1 + {\mu_{c}}^{2}}
\end{equation}
Since $\mu_{sc} = ( 1 + \sinh^{2} r_{c} )^{-1}$ where $r_{c}$ is the single-mode squeezing parameter, we can read Ineq.~~(\ref{eq:nonclreg}) as a lower bound on the single-mode squeezing, depending upon the purity of the state. Notice that the nonclassicality condition Ineq.~~(\ref{eq:nonclreg}) does not depend on the phase $\phi_{c}$.

\subsection{General features of TMST states}
Two-mode squeezed thermal states provide a simple, but sufficiently general and rich background to start our explorations \cite{pmar93,pmar16,pmar03}. They are described by a density operator:
\begin{equation}
    \rrho_{AB} \  \coloneqq \ \  \hat{\mathcal{S}}^{(2)}(\xi)  \left[ \hat{\boldsymbol{\nu}}_{th}(N_{A}) \otimes \hat{\boldsymbol{\nu}}_{th}(N_{B}) \right]  \hat{\mathcal{S}}^{(2)}(\xi)^{\dagger}
\end{equation}
where the two-mode squeezing unitary operator is defined as:
\[    \hat{\mathcal{S}}^{(2)}(\xi) \ \coloneqq \ e^{\xi \hat{a}^{\dagger} \hat{b}^{\dagger} - \xi^{*} \hat{a} \hat{b}} \ , \ \ \ \ \ \ \ \ \ \ \ \ \ \xi \coloneqq r e^{i \psi}                         \]
and $r$ is the two-mode squeezing parameter \footnote{In quantum optics, $r$ depends upon the power of the pumping beam, the interaction time and the second-order nonlinear electric susceptibility of the nonlinear optical element employed to prepare the state.}. The single-mode thermal states $\hat{\boldsymbol{\nu}}_{th}(N)$, instead, are defined according to:
\begin{equation}
     \nnu_{th} \left( N \right)  =  \frac{1}{1 + N } \sum_{n=0}^{\infty} \left( \frac{ N}{ 1 + N} \right)^{n}  \ \vert n \rangle \langle n \vert
\end{equation}
where $N$ the average number of photons.
It is often practical to introduce \emph{purity parameters}:
\begin{equation}
    \mu (N) \ \ \coloneqq \ \ \frac{1}{1 + 2 N} \ , \ \ \ \ \ \ \mu_{s} (r) \ \ \coloneqq \ \ \frac{1}{1 + 2 \sinh^{2} r }
\end{equation}
With these definitions, $\mu(N)$ is precisely the purity of a single-mode thermal state $\hat{\boldsymbol{\nu}}_{th}(N)$, while $\mu_{s}(r)$ is a \emph{two-mode squeezing purity parameter}, and it can be thought of as the purity of each mode in a two-mode squeezed vacuum state with squeezing parameter $r$: the larger $r$, the larger the entanglement, the smaller the purities of the partial traces. Note that they are a sufficiently vast class of two-mode Gaussian state to host both entangled \emph{and} separable states.\\

\par
Using the phase space formalism and the fact that unitary transformations generated by inhomogeneous quadratic hamiltonians act as affine symplectic transformations on the quadrature variables, one can deduce the form of the CM for a generic TMST state (we will assume a squeezing phase $\psi = 0$ from now on):
\begin{equation}
\label{eq:TMSTblock}
    \ssigma \ = \ \left( \begin{array}{cccc}
    a & 0  & c &  0 \\
    0 & a  & 0 & -c \\
    c & 0  & b &  0 \\
    0 & -c & 0 &  b
    \end{array} \right)
\end{equation}
\begin{equation}
\label{eq:TMSTparam}
    \left\{ 
    \begin{aligned}
    & a \ = \ \frac{ - \mu_{A} + \mu_{B} + (\mu_{A} + \mu_{B} ) \cosh 2r}{4 \mu_{A} \mu_{B}} \\
    & b \ = \ \frac{  \mu_{A} - \mu_{B} + (\mu_{A} + \mu_{B} ) \cosh 2r}{4 \mu_{A} \mu_{B}} \\
    & c \ = \ \frac{ (\mu_{A} + \mu_{B} ) \sinh 2r}{4 \mu_{A} \mu_{B}}
    \end{aligned}
    \right.
\end{equation}

\par
Another feature of TMST states that makes them handy in the study of remote generation of nonclassicality is the following. Consider the first mode of a TMST, controlled by Alice. Its quantum state is given by $\rrho_{A} = \Tr_{B} \left[ \rrho_{AB} \right]$: this is still a Gaussian state, with CM proportional to the $2 \times 2$ identity matrix, $\ssigma_{A} = a \cdot \Id_{2}$ and $a$ given by Eq.~(\ref{eq:TMSTparam}).
Since $a \ge \frac12$, according to our criterion for P-nonclassicality of Gaussian states, $\rrho_{A}$ is always classical. The same holds true for $\rrho_{B}$ of course, the reduced quantum state of the second mode, controlled by Bob. In other words, TMST states never possess any local nonclassicality.
\subsection{Gaussian Steering triangoloids}
Our next task is to determine analytically $\ssigma_{A}^{c}$ in the particular case of an initial TMST state $\rrho_{AB}$. Thus we should determine the functional dependence of $\mu_{c}$, $\mu_{sc}$ and $\phi_{c}$ on the initial state's parameters $\mu_{A}, \mu_{B}, r$ and the POVM parameters $\mu, \mu_{s}, \phi$. According to Eq.~(\ref{eq:TMSTblock}), Eq.~(\ref{eq:TMSTparam}) and Eq.~(\ref{eq:condCM}), for a TMST we have:
\begin{equation}
\label{eq:condCMform}
    \ssigma_{A}^{c} \  =  \ a \cdot \Id_{2} -  c^{2} \left[  \sigma_{z} \cdot \left( b \cdot \Id_{2} + \ssigma_{M} \right)^{-1} \cdot \sigma_{z} \right]
\end{equation}
where $\sigma_{z} = \diag(1,-1)$. We now introduce two new parameters to clear up the formulae:
\begin{equation}
    \alpha \  \coloneqq \ b + \frac{1}{2 \mu \mu_{s}} \ , \ \ \ \ \ \ \ \ \beta  \ \coloneqq  \ \frac{ \kappa_{s} }{2 \mu \mu_{s}}
\end{equation}
with $\kappa_{s} = \sqrt{1 - {\mu_{s}}^{2}}$ as in Eq.~(\ref{eq:CMmeas}). Noting that $\alpha > \beta \geq 0 $, we may write:
\[    \left( b \cdot \Id_{2} + \ssigma_{M} \right)^{-1}   =   \frac{1}{ \alpha^{2} - \beta^{2}} \left(  \begin{array}{cc} \alpha + \beta \cos \phi & - \beta \sin \phi \\
- \beta \sin \phi & \alpha - \beta \cos \phi \end{array} \right)      \]
which can be inserted in Eq.~(\ref{eq:condCMform}) to arrive at:
\begin{equation}
\label{eq:CMcondstep3}
    \ssigma_{A}^{c}  =  a \cdot \Id_{2} - \frac{c^2 }{ \alpha^{2} - \beta^{2}} \left(  \begin{array}{cc} \alpha - \beta \cos \phi & - \beta \sin \phi \\
- \beta \sin \phi & \alpha + \beta \cos \phi \end{array} \right)
\end{equation}
At this point, $\phi$ is still the phase of the measurement. However, note that $\mu_{c}$ and $\mu_{sc}$ can be retrieved from Eq.~(\ref{eq:CMcondpar}) using the following relations:
\begin{equation}
\label{eq:condCMparamextr}
    \det \left[ \ssigma_{A}^{c} \right] \  = \  ( 2 {\mu_{c}} )^{-2}  \  ,  \ \ \ \ \ \Tr \left[ \ssigma_{A}^{c} \right]  \ = \ (\mu_{c} \mu_{sc} )^{-1}\,.
\end{equation}
We can use these relations to solve Eq.~(\ref{eq:CMcondstep3}) for $\mu_{c}$ and $\mu_{sc}$:
\begin{equation}
\label{eq:condCMresult}
    \begin{aligned}
     & \mu_{c}  \ \ = \ \ \frac{1}{2} \sqrt{ \frac{ \alpha^2 - \beta^2}{ (c^2 - a \alpha)^2 - a^2 \beta^2} } \\
     & \mu_{sc}  \ \ = \ \  \frac{ \sqrt{ (\alpha^2 - \beta^2) \left[  (c^2 - a \alpha)^2 - a^2 \beta^2    \right] }   }{a( \alpha^2 -\beta^2) - \alpha c^2} 
    \end{aligned}
\end{equation}
and we see that $\mu_{c}$ and $\mu_{sc}$ are independent of $\phi$, therefore the same holds true for the conditional nonclassicality, as implied by Ineq.~~(\ref{eq:nonclreg}). We also deduce that $\phi_{c} = \phi$, so that we can completely ignore the phase for TMST states.\\

\begin{figure*}
\centering
\includegraphics[width=0.3 \textwidth]{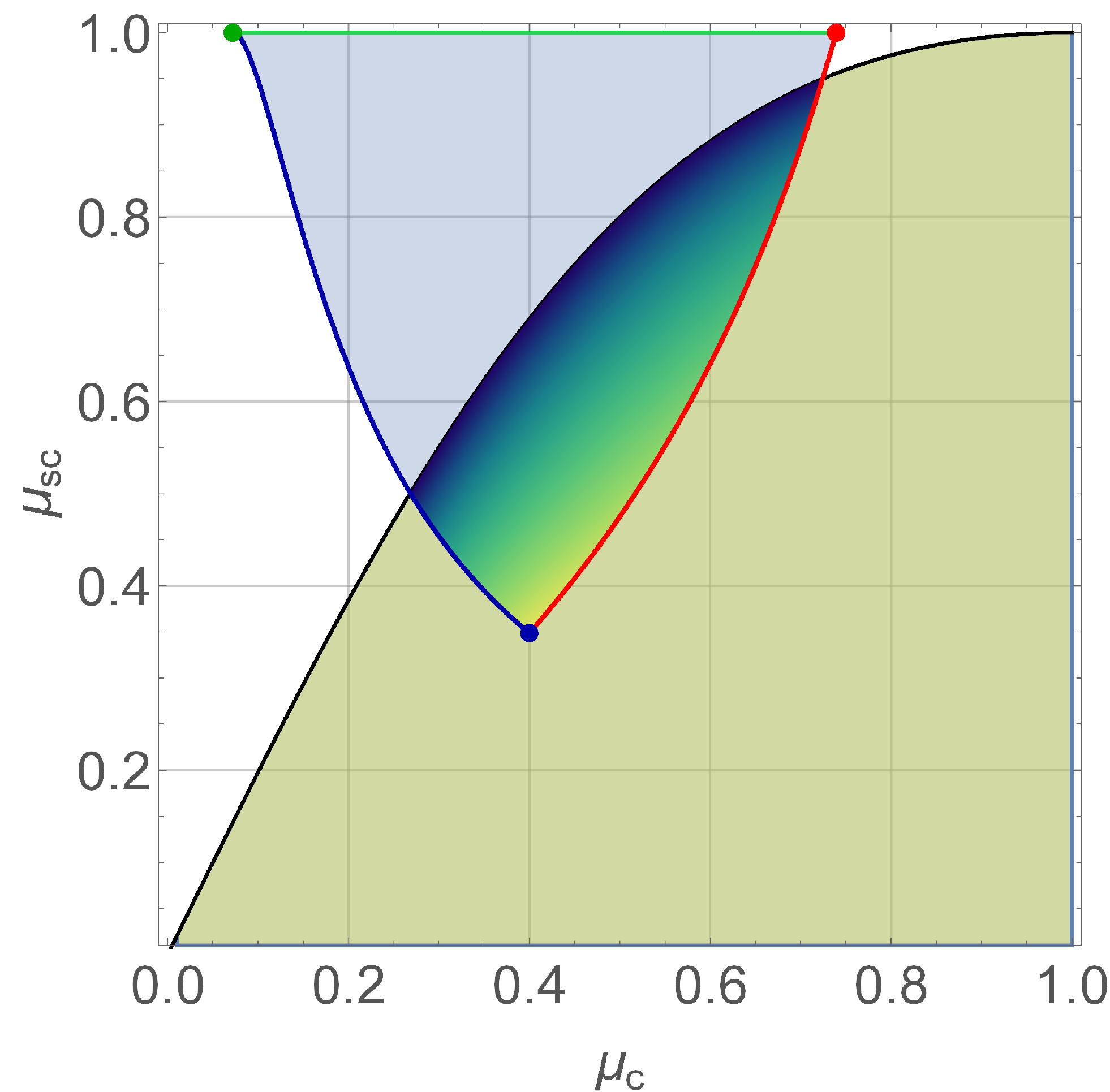}
$\quad $
\includegraphics[width=0.3 \textwidth]{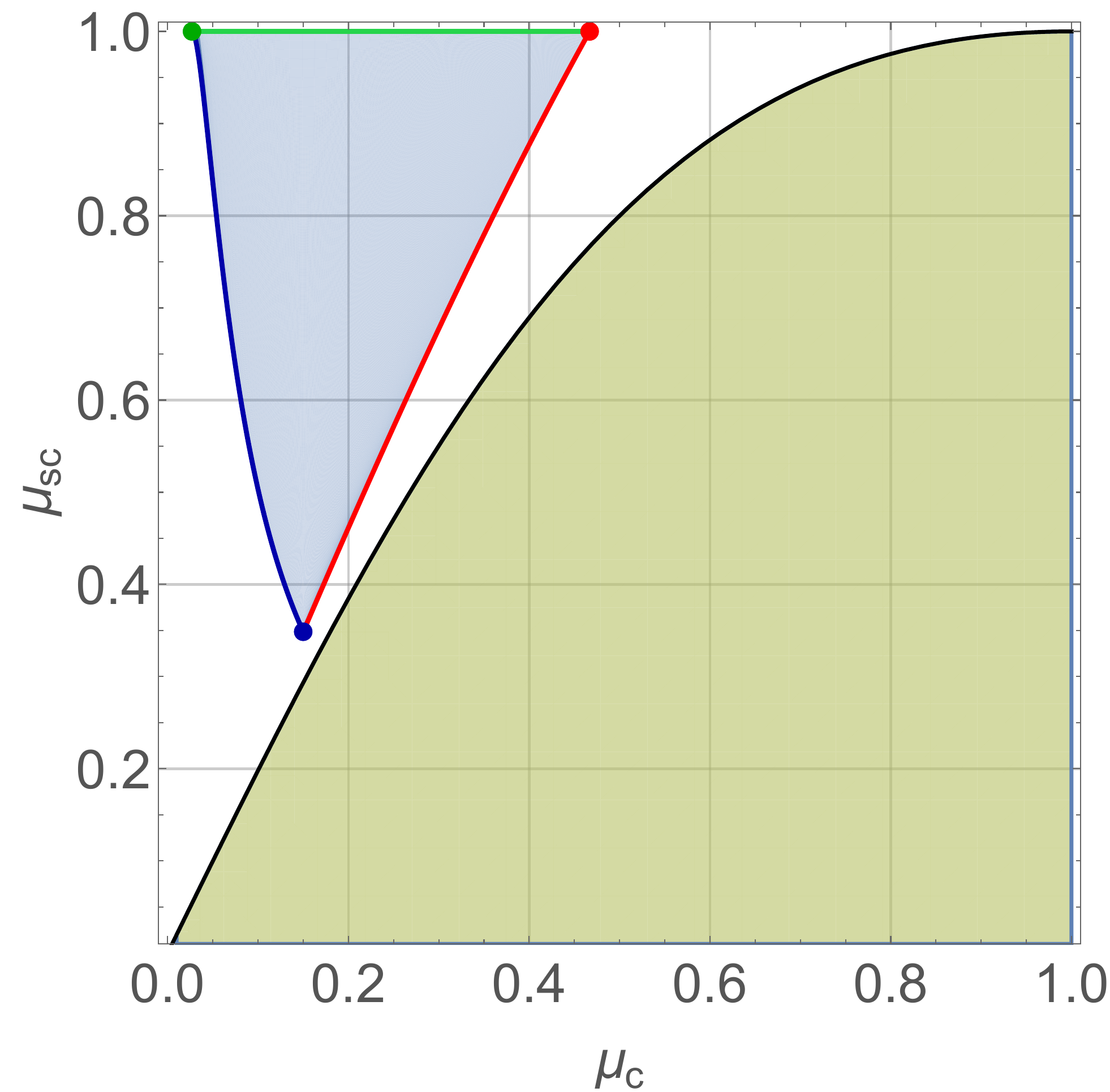}
$\quad $
\includegraphics[width=0.31 \textwidth]{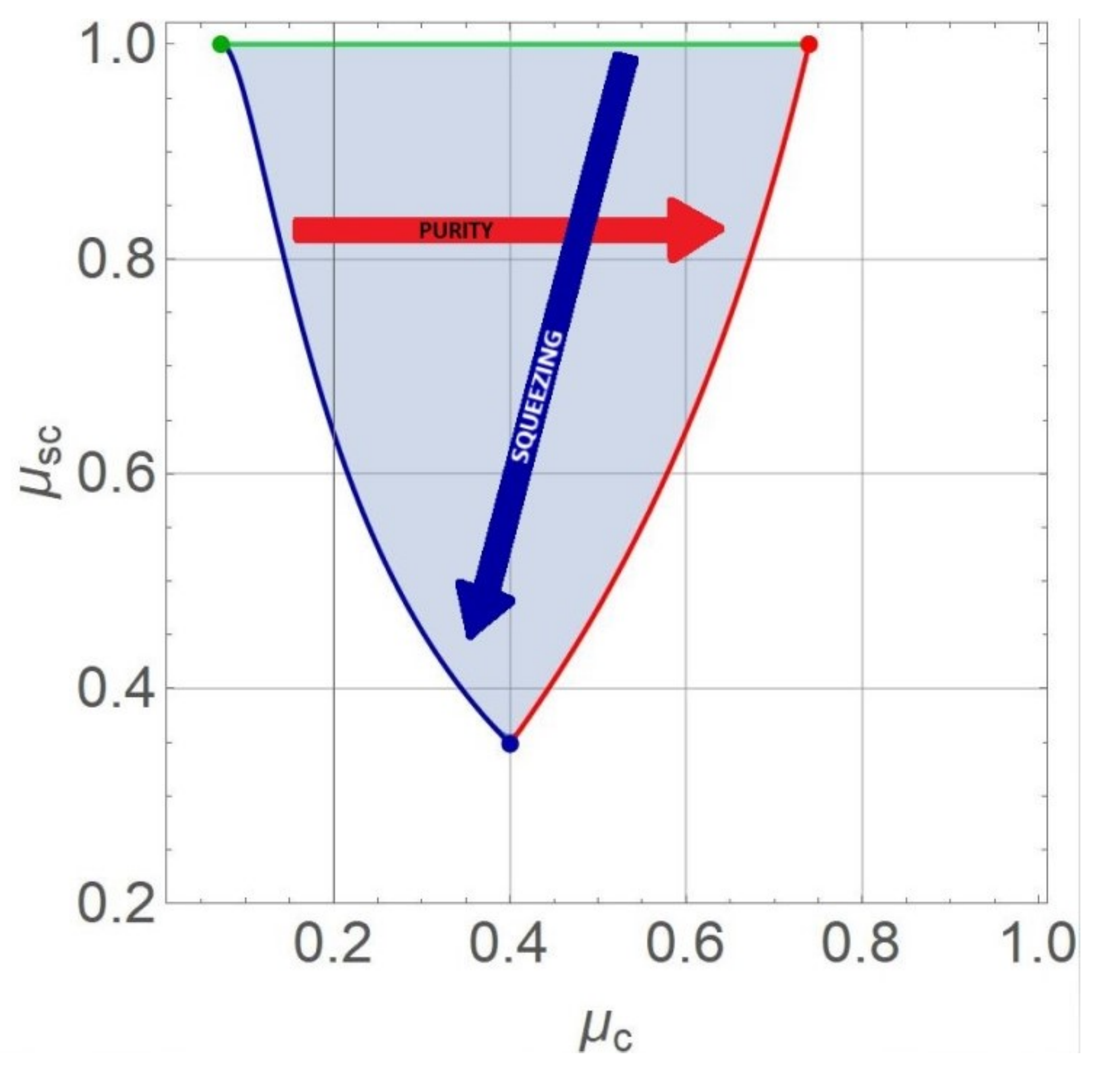}
\caption{\label{fig:triang1}
(Left): Triangoloid for TMST state with $\mu_{A} = \mu_{B} = 0.4$ and $r = 1.2$, 
$\mu_{c}$ is the purity of the conditional state, while $\mu_{sc} = (1 + 2 \sinh^{2} r_{c} )^{-1}$ quantifies squeezing of the conditional state. The light-brown region contains all nonclassical conditional states. (Middle):  triangoloid 
for $\mu_{A} = \mu_{B} = 0.15$ and $r = 1.2$.
(Right): triangoloid showing the directions of increasing purity of the measurement $\mu$ (red arrow) and increasing squeezing of the measurement $r_{m}$ (blue arrow).}
\end{figure*}

\par
We can display the results of Eq.~(\ref{eq:condCMresult}) using \emph{triangoloid plots}. For a given TMST state, i.e. for fixed values of $\mu_{A}$, $\mu_{B}$ and $r$, we plot the region in the parameters' space of $\ssigma_{A}^{c}$ containing all points $(\mu_{c}, \mu_{sc}) \in (0,1] \times (0,1]$ described by Eq.~(\ref{eq:condCMresult}) for all possible Gaussian measurements on Bob's mode, or in other words, for all possible values of the parameters $\mu, \mu_{s} \in (0,1]$ of the measurement's CM. For $\mu_{A} = \mu_{B} = 0.4$ and $r = 1.2$, we obtain the triangular-shaped region in the left image of Fig.~\ref{fig:triang1}, delimited by red, green and blue sides. The light-brown region covering the bottom-right corner, instead, contains all values of $\mu_{c}$ and $\mu_{sc}$ corresponding to a nonclassical conditional state, according to Ineq.~(\ref{eq:nonclreg}). Since the triangoloid intersects this nonclassical region, the chosen TMST state allows Bob to \emph{steer} Alice's mode into a nonclassical state by means of some Gaussian measurements. In this case, the area of intersection is shaded according to the nonclassical depths of the conditional states, with lighter (yellow) colors corresponding to higher values of $\mathfrak{T}$. \\

On the other hand, the triangoloid associated with a TMST whose parameters are $\mu_{A} = \mu_{B} = 0.15$ and $r = 1.2$, does not intersect the nonclassical region as shown in the middle panel of Fig.~\ref{fig:triang1}: starting with this state, there is no Gaussian measurement that Bob can do on his mode to condition Alice's mode into a nonclassical state. We are therefore led to the following definition:
\begin{defn}
\label{def:nclsteerTMST}
A TMST state is said to be \emph{nonclassically steerable} from mode $B$ to mode $A$ if there exists a Gaussian measurement $\{ \hat{\boldsymbol{\Pi}}_{\alpha} \}_{\alpha \in \complex}$  on mode $B$ such that the conditional state $\rrho_{A}^{(\alpha)}$ of mode $A$ is nonclassical.
\end{defn}

In order to gain some intuition about what kind of TMST states are nonclassically steerable, let us look closer at the triangoloid plots. We will list the relationships between the relevant points and sides of the triangoloid, and the corresponding measurements that achieve those conditional states:
\begin{itemize}
    \item The red, rightmost side corresponds to $\mu = 1$ in the measurement's CM. Hence, these are conditional states of Alice's mode corresponding to (non orthogonal) projective measurements of Bob's mode on displaced single-mode squeezed vacuum states. The upper-right red vertex is attained for zero squeezing ($\mu_{s} = 1$), or in other words \emph{heterodyne measurement}, implemented by projectors onto coherent states. Squeezing of the measurement increases towards the blue vertex ($\mu_{s}$ decreases). Note that this precisely correspond to an increasing of the conditional squeezing ($\mu_{sc}$ decreases too).
    \item The green, uppermost side corresponds to non squeezed POVMs ($\mu_{s} = 1$). Inserting this value in the second of Eq.~(\ref{eq:condCMresult}), one can immediately deduce that $\mu_{sc} = 1$ for any TMST state: the conditional state has zero squeezing too, therefore it is always classical. The purity of the associated POVMs decreases ($\mu$ decreases) along this side from the red vertex to the green vertex, and correspondingly does the purity of the conditional state. 
    \item The leftmost, blue side is not strictly part of the triangoloid, because it is attained only in some unphysical limit of the POVM. Specifically, if one renames the measurement's purity parameters as $\mu = t x$ and $\mu_{s} = x$, the parametric equation for the blue side as a function of the parameter $t \in \reals^{+}$ is given by:
    \begin{subequations}
    \label{eq:blueside}
        \begin{align}
        & \lim_{x \to 0^{+}} \mu_{c} \left[ \mu_{A}, \mu_{B}, r ; \mu = t x, \mu_{s} =  x \right] \\
        & \lim_{x \to 0^{+}} \mu_{sc} \left[ \mu_{A}, \mu_{B}, r ; \mu = t x, \mu_{s} =  x \right]
        \end{align}
    \end{subequations}
    The value of $t$ increases from $t= 0$ at the upper-left, green vertex, to $t \to + \infty$ towards the blue, bottom vertex. Note that the green vertex ($t = 0$) amounts to setting $\mu = 0$ before taking the limit: in this case, the conditional CM becomes independent of $\mu$, provided that $\mu \neq 0$. In such a limit, all POVM's effects $\hat{\boldsymbol{\Pi}}_{\alpha}$ approach the identity operator on mode $B$, which is equivalent to measuring without recording the outcome, hence we can describe the green vertex by the condition $\ssigma_{A}^{c} = \ssigma_{A}$.
    \item The blue, bottom vertex is the most important point for nonclassical steering. Indeed, one can infer graphically (and we will later prove it analytically) that this is the decisive point to establish whether the triangoloid intersects the nonclassical region or not. Formally, it amounts to taking the two limits $t \to +\infty$ and $x \to 0$ together in Eq.~(\ref{eq:condCMresult}), so as to keep $\mu > 0$ and finite, and consequently $\mu_{s} \to 0^{+}$. However, it is simpler to describe it directly as the infinite measurement's squeezing limit ($\mu_{s} \to 0$) of Eq.~(\ref{eq:condCMresult}); the conditional parameters $\mu_{c}$ and $\mu_{sc}$ become independent of $\mu \neq 0$ in this limit. Physically, this is achieved by projective measurements on the field quadratures (also known as \emph{homodyne measurements}). Note that, since $\mu_{c}$ and $\mu_{sc}$ do not depend on the measurement's phase $\phi$, any choice of field quadrature of mode $B$ will lead to this point, but because $\phi_{c} = \phi$, different choices of $\phi$ yield distinct conditional states.
\end{itemize}

The trends we just listed are summarized by the right panel of Fig.~\ref{fig:triang1}: the red arrow shows the direction in which the conditional states in the triangoloid are associated with increasing purity of the Gaussian measurements that generated them, while the blue arrow indicates the direction of increasing squeezing of the associated Gaussian measurements.
Relying on these qualitative considerations, we now prove the main result concerning nonclassical steering for TMST states:
\begin{propos}
\label{thm:nclsteerTMST}
Given a generic TMST state $\rrho_{AB}$, the nonclassicality of the conditional state $\rrho_{A}^{(\alpha)}$ resulting from Gaussian measurement on mode $B$ is monotonically non-decreasing with the squeezing parameter $r_{m}$ of the measurement. In particular, among Gaussian measurements, any \emph{field-quadrature projective measurement} is optimal to remotely generate nonclassicality with a TMST state and the TMST is nonclassically steerable from mode $B$ to mode $A$ if and only if its parameters fulfill the following inequality:
\begin{equation}
\label{eq:nclsteerTMST}
    \boldsymbol{\varsigma}_{A \vert B} \  > \ 1
\end{equation}
where we introduced the \emph{nonclassical steerability} from $B$ to $A$:
\begin{equation}
\label{eq:defstigmaTMST}
    \boldsymbol{\varsigma}_{A \vert B} \ \ \coloneqq \ \ \frac{ \mu_{A} - \mu_{B}}{2} + \frac{\mu_{A} + \mu_{B}}{2} \cosh 2r\,.
\end{equation}
\end{propos}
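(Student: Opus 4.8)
The plan is to reduce the entire question to the behaviour of the least eigenvalue of the conditional covariance matrix $\ssigma_{A}^{c}$, since by the Gaussian nonclassicality criterion (Eq.~(\ref{eq:ncldepthgauss})) the conditional state is nonclassical precisely when this eigenvalue drops below $\tfrac12$. First I would diagonalise the matrix multiplying $c^2/(\alpha^2-\beta^2)$ in Eq.~(\ref{eq:CMcondstep3}): it is symmetric with trace $2\alpha$ and determinant $\alpha^2-\beta^2$, hence its eigenvalues are $\alpha\pm\beta$, independently of the phase $\phi$. Consequently $\ssigma_{A}^{c}$ has eigenvalues $\lambda_{\pm}=a-c^2/(\alpha\mp\beta)$, and since $\alpha>\beta\ge0$ the least one is $\lambda_{-}=a-c^2/(\alpha-\beta)$. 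This both confirms the $\phi$-independence already noted and shows that all the measurement dependence of the conditional nonclassicality is funnelled through the single combination $\alpha-\beta$.

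The key step is then to rewrite $\alpha-\beta=b+(1-\kappa_{s})/(2\mu\mu_{s})$ in a form that makes its dependence on the measurement squeezing transparent. Using $\mu_{s}=(\cosh 2r_{m})^{-1}$, so that $\kappa_{s}=\tanh 2r_{m}$, together with the identity $1-\kappa_{s}=\mu_{s}^{2}/(1+\kappa_{s})$, I expect the $\mu_{s}$-dependent term to collapse to $e^{-2r_{m}}/(2\mu)$, yielding the clean expression $\alpha-\beta=b+e^{-2r_{m}}/(2\mu)$. From this the first half of the Proposition is immediate: for fixed measurement purity $\mu$ the quantity $\alpha-\beta$ is strictly decreasing in $r_{m}$, so $\lambda_{-}$ is decreasing and the nonclassical depth $\tfrac12-\lambda_{-}$ is non-decreasing in $r_{m}$. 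The infimum $\alpha-\beta\to b$ is reached as $r_{m}\to\infty$ irrespective of $\mu$, which is precisely the field-quadrature (homodyne) limit, establishing its optimality and giving the optimal least eigenvalue $\lambda_{-}^{\min}=a-c^2/b$.

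Finally I would substitute the TMST parameters of Eq.~(\ref{eq:TMSTparam}) into $\lambda_{-}^{\min}=(ab-c^2)/b$. The anticipated simplification is that the hyperbolic identity $\cosh^2 2r-\sinh^2 2r=1$ makes the numerator collapse, since $ab-c^2$ reduces to $[(\mu_{A}+\mu_{B})^2-(\mu_{A}-\mu_{B})^2]/(4\mu_{A}\mu_{B})^2=1/(4\mu_{A}\mu_{B})$, so that $\lambda_{-}^{\min}=1/[(\mu_{A}+\mu_{B})\cosh 2r+(\mu_{A}-\mu_{B})]$. The state is nonclassically steerable iff $\lambda_{-}^{\min}<\tfrac12$, which rearranges directly into $\boldsymbol{\varsigma}_{A\vert B}>1$ with $\boldsymbol{\varsigma}_{A\vert B}$ as in Eq.~(\ref{eq:defstigmaTMST}). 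I expect the only genuinely delicate points to be bookkeeping: verifying $\alpha-\beta>0$ so that $\lambda_{-}$ is indeed the minimum and no denominator vanishes, and observing that the optimum is attained only in the limit (homodyne being an idealised, infinitely squeezed POVM), so that the supremal nonclassical depth is a limiting value. The algebraic collapse in the last step is the part that looks like it needs care, but it is forced by the hyperbolic identity and is otherwise routine.
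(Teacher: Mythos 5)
Your proof is correct, and it reaches the result by a genuinely cleaner route than the paper's. The paper works with the composite parametrization $(\mu_{c},\mu_{sc})$ of Eq.~(\ref{eq:condCMresult}), forms $\mathfrak{T}=\tfrac12-\lambda_{-}$ through Eq.~(\ref{eq:eigenvcondCM}), and establishes monotonicity by an explicit computation of the derivative of $\mathfrak{T}$ with respect to $\mu_{s}$, whose sign is checked ``by inspection''; the criterion is then extracted in the homodyne limit $\mu_{s}\to 0$ via Ineq.~(\ref{eq:nonclreg}). You instead diagonalize $\ssigma_{A}^{c}$ directly from Eq.~(\ref{eq:CMcondstep3}) --- trace $2\alpha$, determinant $\alpha^{2}-\beta^{2}$, hence eigenvalues $\lambda_{\pm}=a-c^{2}/(\alpha\mp\beta)$ --- so that all measurement dependence funnels through $\alpha-\beta$, and the identities $\mu_{s}=1/\cosh 2r_{m}$, $\kappa_{s}=\tanh 2r_{m}$ collapse it to $\alpha-\beta=b+e^{-2r_{m}}/(2\mu)$. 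Monotonicity in $r_{m}$, the $\phi$- and $\mu$-independence of the optimum, and the homodyne limit $\lambda_{-}\to a-c^{2}/b$ are then all manifest without any differentiation, and the determinant identity $ab-c^{2}=1/(4\mu_{A}\mu_{B})$ turns $a-c^{2}/b<\tfrac12$ into $\boldsymbol{\varsigma}_{A\vert B}>1$ exactly as you say. Your two flagged bookkeeping points are indeed harmless: $\alpha-\beta\geq b\geq\tfrac12>0$ because the uncertainty relation forces $b\geq\tfrac12$, and the attainment-only-in-the-limit issue does not threaten the ``if and only if'', since $\lambda_{-}<\tfrac12$ is an open condition and $\lambda_{-}$ decreases strictly (for $r>0$) to its infimum $a-c^{2}/b$, so a finite-squeezing POVM already witnesses steerability whenever $a-c^{2}/b<\tfrac12$ (and the paper in any case admits quadrature measurements as legitimate limiting Gaussian measurements). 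What the paper's route buys is continuity with its triangoloid machinery, because $(\mu_{c},\mu_{sc})$ are precisely the plotted coordinates; what yours buys is transparency --- an exact spectral formula, strict rather than merely weak monotonicity, and a derivation that foreshadows the canonical-form result of Proposition~\ref{propos:WNScanon}, where the same Schur-complement eigenvalue $a-c^{2}/b$ reappears.
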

\begin{proof}
Combining Eq.~(\ref{eq:condCMresult}) and Eq.~(\ref{eq:eigenvcondCM}), one can express the nonclassicality of the conditional state, $\mathfrak{T} = \frac{1}{2} - \lambda_{-}$ where $\lambda_{-}$ is the smallest eigenvalue of $\ssigma_{A}^{c}$, as a function of $\mu_{A}, \mu_{B}, r, \mu$ and $\mu_{s}$. Explicit calculation of the derivative of $\mathfrak{T}$ with respect to $\mu_{s} = ( 1 + \sinh^{2} r_{m} )^{-1}$ shows, by inspection, that it is always non-positive (under the assumptions $0 < \mu, \mu_{s} ,\mu_{A} , \mu_{B} \leq 1$), therefore $\mathfrak{T}$ is monotonically non-decreasing with the measurement's squeezing $r_{m}$. Inequality (\ref{eq:nclsteerTMST}) then follows from Ineq.~(\ref{eq:nonclreg}) in the homodyne limit $\mu_{s} \to 0$ ($\mu \neq 0$).
\end{proof}
\subsection{Role of entanglement}
An interesting question at this point is: does nonclassical steering imply (or is it implied by) entanglement? In order to answer this question, it is mandatory to recall that the Peres-Horodecki criterion, based on the negativity of the partially transposed quantum state, is a necessary and sufficient condition for Gaussian entanglement \cite{simon00}. Let us suppose that $\rrho_{AB}$ is a generic bipartite Gaussian state with CM $\ssigma$. If we call:
\begin{equation}
    \epsilon \ \ \coloneqq \ \ \mathrm{max} \left[ 0, - \log ( 2 \tilde{d}_{-}) \right]
\end{equation}
the entanglement negativity, where $\tilde{d}_{-}$ is the smallest symplectic eigenvalue of the partially mirror-reflected $\ssigma$, the condition for entanglement of $\rrho_{AB}$ is simply $\epsilon > 0$. In the case of a TMST state \cite{pmar01}, one arrives at the following necessary and sufficient inequality for entanglement:
\begin{equation}
\label{eq:nsentTMST}
       \frac{\sqrt{ (\mu_{A} + \mu_{B})^2 \cosh^{2} 2r - 4 \mu_{A} \mu_{B}}  }{2 - (\mu_{A} + \mu_{B} ) \cosh 2 r}   >   1\,.
\end{equation}

We can now provide the answer to the aforementioned question in the form of the following proposition:

\begin{propos}
\label{propos:TMSTent}
Given a generic TMST Gaussian state, entanglement is necessary, but not sufficient for it to be nonclassically steerable (in at least one direction).
\end{propos}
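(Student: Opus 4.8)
The plan is to reduce each property to an elementary inequality in the local purities $\mu_A,\mu_B\in(0,1]$ and the squeezing $r$, and then compare the two. First I would recast the entanglement criterion into a sign-unambiguous form. Writing $S=\mu_A+\mu_B$, $D=\mu_A-\mu_B$ and $C=\cosh 2r$, and using $\cosh^2 2r-\sinh^2 2r=1$ together with $4\mu_A\mu_B=S^2-D^2$, the radicand $(\mu_A+\mu_B)^2\cosh^2 2r-4\mu_A\mu_B$ becomes $S^2\sinh^2 2r+D^2$. Squaring the criterion in the regime $SC<2$ (positive denominator) and cancelling, everything collapses to $(\mu_A+\mu_B)\cosh 2r>1+\mu_A\mu_B$; in the complementary regime $SC\ge 2$ one has $\tilde d_-<\tfrac12$ automatically, since $SC\ge 2\ge 1+\mu_A\mu_B$ (away from the trivial product-vacuum point), so this single inequality is the entanglement boundary in all cases. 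Thus a TMST is entangled iff $(\mu_A+\mu_B)\cosh 2r>1+\mu_A\mu_B$.

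Next I would record the steerability condition in at least one direction. By Proposition~\ref{thm:nclsteerTMST}, $B\to A$ steerability reads $\boldsymbol{\varsigma}_{A\vert B}>1$; relabelling the two modes ($\mu_A\leftrightarrow\mu_B$) gives $\boldsymbol{\varsigma}_{B\vert A}=\tfrac{\mu_B-\mu_A}{2}+\tfrac{\mu_A+\mu_B}{2}\cosh 2r$ for the reverse direction, so steerability in at least one direction is $\max(\boldsymbol{\varsigma}_{A\vert B},\boldsymbol{\varsigma}_{B\vert A})>1$, i.e. $|\mu_A-\mu_B|+(\mu_A+\mu_B)\cosh 2r>2$. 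For necessity I would then argue that this inequality forces entanglement: it yields $(\mu_A+\mu_B)\cosh 2r>2-|\mu_A-\mu_B|$, so it suffices to check $2-|\mu_A-\mu_B|\ge 1+\mu_A\mu_B$ on $(0,1]^2$. Assuming $\mu_A\ge\mu_B$ without loss of generality, this is exactly $1-(\mu_A-\mu_B)-\mu_A\mu_B=(1-\mu_A)(1+\mu_B)\ge 0$, which always holds; chaining the two bounds gives $(\mu_A+\mu_B)\cosh 2r>1+\mu_A\mu_B$, i.e. entanglement.

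For insufficiency I would exploit the same factorization: $(1-\mu_A)(1+\mu_B)$ vanishes only when the larger purity equals one, so whenever both modes are locally mixed there is a strict gap $1+\mu_A\mu_B<2-|\mu_A-\mu_B|$, and any $r$ with $\cosh 2r$ lodged in that window produces an entangled but non-steerable TMST. I would make this explicit in the symmetric case $\mu_A=\mu_B=\mu<1$, where the window is $\tfrac{1+\mu^2}{2\mu}<\cosh 2r\le\tfrac1\mu$ (nonempty precisely because $\mu<1$), and point to the middle-panel triangoloid ($\mu_A=\mu_B=0.15$, $r=1.2$) as a concrete instance, for which $(\mu_A+\mu_B)\cosh 2r\approx 1.67$ exceeds $1+\mu_A\mu_B\approx 1.02$ (entangled) yet stays below $2$ (not steerable). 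The only delicate point in the whole argument is the bookkeeping in the first step—clearing the square root requires tracking the sign of $2-(\mu_A+\mu_B)\cosh 2r$—after which both directions follow immediately from the single elementary inequality $(1-\mu_A)(1+\mu_B)\ge 0$.
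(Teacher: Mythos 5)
Your proof is correct, and although it rests on the same underlying algebra as the paper's, it is organized around a cleaner key lemma and ends up strictly more general in the counterexample half. The paper never isolates your sign-unambiguous criterion $(\mu_A+\mu_B)\cosh 2r>1+\mu_A\mu_B$: it substitutes $\boldsymbol{\varsigma}$ into Ineq.~(\ref{eq:nsentTMST}), does the sign bookkeeping inside the proof of necessity, and squares there; the resulting Ineq.~(\ref{eq:nsentTMST3}) together with the observation that its right-hand side is $\leq 2$ is exactly your factorization $(1-\mu_A)(1+\mu_B)\geq 0$ in disguise, so the two necessity arguments are essentially identical. The genuine divergence is in insufficiency: the paper restricts to the symmetric case $\mu_A=\mu_B$ (recovering precisely your window, Ineq.~(\ref{ineq:lowerboundsteer})) and covers asymmetry only through the degenerate example $\mu_A=0.5$, $\mu_B=0$, whereas your gap argument $1+\mu_A\mu_B<2-|\mu_A-\mu_B|$ produces entangled, non-steerable states uniformly for every pair with $\max(\mu_A,\mu_B)<1$, in both symmetric and asymmetric regimes. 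You also get the boundary exactly right: the window is nonempty \emph{iff} the relevant purity is $<1$, whereas the paper's claim that Ineq.~(\ref{ineq:lowerboundsteer}) admits solutions ``for any $0<\mu_A\leq 1$'' fails at $\mu_A=1$, as it must, since pure TWB states are steerable whenever entangled. One presentational caveat: your handling of the regime $(\mu_A+\mu_B)\cosh 2r\geq 2$ is circular as written, since you justify $\tilde{d}_-<\tfrac12$ there by invoking the very criterion being derived. The gap is harmless---either adopt the paper's own reading of Ineq.~(\ref{eq:nsentTMST}) (negative right-hand side $\Rightarrow$ entangled, exactly as its proof does), or note that for a TMST one has explicitly $\tilde{d}_-=\bigl[(\mu_A+\mu_B)\cosh 2r-\sqrt{(\mu_A+\mu_B)^2\cosh^2 2r-4\mu_A\mu_B}\,\bigr]/(4\mu_A\mu_B)$, from which the equivalence follows in one line in all regimes---but you should state which route you take.
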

\begin{proof}
We will show that $\boldsymbol{\varsigma}_{A \vert B}  > 1$ implies $\epsilon > 0$ and then we provide a counterexample to the inverse implication. Since initial entanglement is always a symmetric quantity in $\mu_{A}$ and $\mu_{B}$, we can assume $\boldsymbol{\varsigma} \equiv \boldsymbol{\varsigma}_{A \vert B}  >1$ so that mode $B$ can steer mode $A$ without any loss of generality. Let us re-express Ineq.~(\ref{eq:nsentTMST}) in terms of $\mu_{A}, \mu_{B}$ and $\boldsymbol{\varsigma}$, by inverting the definition of the steering parameter:
\begin{equation}
\label{eq:nsentTMST2}
  \sqrt{ ( 2 \boldsymbol{\varsigma} - \mu_{A} +\mu_{B} )^{2} - 4 \mu_{A} \mu_{B} }  > 2 - ( 2 \boldsymbol{\varsigma} - \mu_{A} + \mu_{B} )\,.
\end{equation}
 From Ineq.~(\ref{eq:nsentTMST}) we already know that the left-hand side is real and non-negative. Then, if the right-hand side is strictly negative, $\epsilon > 0$ and we are done. Otherwise, suppose that $
   2 \boldsymbol{\varsigma} - \mu_{A} + \mu_{B} \ \leq \ 2 $, so that the right-hand side of Ineq.~(\ref{eq:nsentTMST2}) is also positive and we can square both sides and cancel some terms:
\begin{equation}
\label{eq:nsentTMST3}
 2 \boldsymbol{\varsigma} \ \ > \ \ 1 + \mu_{A} - \mu_{B} (1 - \mu_{A})\,.
\end{equation}
The right-hand side is clearly $ \leq 2$,
while the left-hand side of Ineq.~(\ref{eq:nsentTMST3}) is always $> 2$ under the steerability hypothesis $\boldsymbol{\varsigma} >1$. This concludes the proof that $\boldsymbol{\varsigma} > 1$ implies $ \epsilon > 0$.

To show that the contrary is not necessarily true and entanglement is not sufficient for nonclassical steerability, let us limit ourselves to the symmetric  case $\mu_{A} = \mu_{B}$, in which the entanglement condition Ineq.~(\ref{eq:nsentTMST}) reduces to:
\begin{equation}
\label{ineqnegativity4}
    \mu_{A} \sqrt{ \cosh^{2} 2r - 1   } \ \ > \ \ 1 - \mu_{A} \cosh 2r \,.
\end{equation}
These states are \emph{not} nonclassically steerable if and only if:
\begin{equation}
\label{ineqsteer4}
    \boldsymbol{\varsigma}_{ A \vert B} [ \mu_{A}, \mu_{A}, r] \ =  \ \mu_{A} \cosh 2r \  <  \ 1\,.
\end{equation}
We can then solve jointly Ineq.~(\ref{ineqnegativity4}) and Ineq.~(\ref{ineqsteer4}) by noting that the second allows us to square the first and then imposing again Ineq.~(\ref{ineqsteer4}) to finally arrive at:
\begin{equation}
\label{ineq:lowerboundsteer}
\frac{1 +  \mu_{A}^{2}}{2 \mu_{A}}  <  \cosh 2r  <  \frac{1}{\mu_{A}}\,.
\end{equation}
This constraint on $r$ admits solutions for any $0 < \mu_{A} \leq 1$, because the lower bound is always smaller than the upper bound in the allowed range of $\mu_{A}$, so \emph{no matter the amount of two-mode squeezing, there exist infinitely many symmetric entangled initial states that cannot be used for nonclassical steering}. Asymmetric instances also exist: take for example $\mu_{A}=0.5$ and $\mu_{B}=0$. Then one can check that for $ 2 < \cosh 2r < 3$ both $\epsilon > 0$ and $\boldsymbol{\varsigma}_{A \vert B} < 1$. Since $\mu_{A} > \mu_{B}$ in this case, $\boldsymbol{\varsigma}_{B \vert A} < \boldsymbol{\varsigma}_{A \vert B}$, so Nonclassical steering is forbidden also in the other direction. 
\end{proof}

\subsection{Asymmetric nonclassical steering and further comments} 
Since Ineq.~(\ref{eq:nclsteerTMST}) is clearly asymmetric with respect to the two modes, it is possible to have TMST states that are nonclassically steerable just in one direction. For example, we can consider together the inequalities for nonclassical steerability from mode $B$ to $A$ and non-steerability from $A$ to $B$:
\begin{equation*}
    \begin{aligned}
    & \mu_{A} - \mu_{B} + (\mu_{A} + \mu_{B}) \cosh 2r > 2\,, \\
    & \mu_{B} - \mu_{A} + (\mu_{A} + \mu_{B}) \cosh 2r < 2\,.
    \end{aligned}
\end{equation*}
They can be re-expressed as:
\begin{equation}
    \mu_{A} > \mu_{B}   \ \land \ \ \frac{2 - \mu_{A} + \mu_{B}}{\mu_{A}+ \mu_{B}}  < \cosh 2r  <  \frac{2 + \mu_{A} - \mu_{B}}{\mu_{A}+ \mu_{B}}\,.
\end{equation}
This can happen for arbitrarily large values of the two-mode squeezing parameter $r$, since it suffices to choose $\mu_{1} = \frac{3}{2n}$ and $\mu_{2} = \frac{1}{2n}$ with $n$ a large natural number, to have $n - \frac{1}{2}  < \cosh 2r  <  n + \frac{1}{2}$. At fixed values of $\mu_{A}$ and $\mu_{B}$, instead, we see that there is a minimum value of $r$ after which nonclassical steering becomes possible, but in only one direction (mode $B$ can steer mode $A$ if $\mu_{A} > \mu_{B}$ and vice versa otherwise), until a maximum value of $r$ is exceeded and then the ability of nonclassical steering becomes necessarily symmetric for all larger values of $r$.\\

We can also recast the condition $\boldsymbol{\varsigma}_{A \vert B} > 1$ in terms of the mean number of squeezing photons per mode, $N_{s} = \sinh^{2} r$, and the mean number of thermal photons in each mode, $N_{A} = \frac{1 - \mu_{A}}{2 \mu_{A}}$ (and similarly for $N_{B}$):
\begin{equation}
\label{ineqnecsuffasymm}
    N_{s} \ > \ \frac{N_{A}  \left( 1 + 2 N_{B} \right)}{1 + N_{A} + N_{B}}\,.
\end{equation}
We can deduce some useful characterization exploiting the above inequality:
\begin{itemize}
    \item If $N_{A}=0$, i.e. the mode to be steered has zero thermal noise, then any amount of initial squeezing ($N_{s} > 0$) is enough to ensure nonclassical steerability from $B$ to $A$. Graphically, $N_{A}=0$ corresponds to triangoloids whose red, upper-right vertex is in $( \mu_{c}, \mu_{sc}) = (1,1)$, so that graphically it is clear that they always intersect the nonclassical region.
    \item If $N_{B}=0$, then the mode to be measured has zero thermal noise and:
    \[     N_{s} > \frac{N_{A}}{1+ N_{A}}   \]
    In particular, $N_{s} > 1$, or $r > \mathrm{asinh} (1) \simeq 0.8814$ guarantees nonclassical steerability for any $N_{A}$. 
    \item If $N_{A} \to + \infty$ the condition simplifies to:
    \[  N_{s} \ > \ 2 N_{B} + 1   \]
    while, for $N_{B} \to + \infty$:
    \[   N_{s} \ > \ 2 N_{A}  \]
    \item For symmetric TMST states, i.e. for $N_{A} = N_{B}$, the condition simplifies to $N_{s} > N_{A}$: the number of nonclassical resources per mode ($N_{s}$) should be strictly larger than the number of thermal photons per mode. \\
\end{itemize}
\section{\label{sec:noisy}Application to noisy propagation of TWB states}
We will now discuss a partially realistic scenario to test the notion of nonclassical steering. It would involve Bob preparing a correlated two-mode state, sending one of the modes to Alice through an inevitably noisy channel, and then trying to nonclassically steer her mode at a distance by Gaussian measurement on the mode he kept. Clearly, then, the residual noise acting directly on Bob's mode can be neglected: if it is detrimental, he can perform the measurement on mode $B$ just after the preparation, before any noise can spoil his state, and then send the conditional state of mode $A$ to Alice. The noise acting on mode $A$, instead, is truly inescapable, as always when one tries to broadcast quantum states. If we call $\rrho_{AB}$ the generic bipartite state of the two modes and $\mathcal{E}_{A} (t) $ the noise map acting on mode $A$, then the state of the two modes after a propagation time $t$ of mode $A$ is:
\begin{equation}
    \rrho_{AB}(t) \ \ = \ \ \left( \mathcal{E}_{A} \otimes \Id_{B} \right) \left[ \rrho_{AB} (0) \right]
\end{equation}
with $\rrho_{AB}(0) = \rrho_{AB}$. Bob can perform a measurement described by the POVM $\{ \hat{\boldsymbol{\Pi}}_{\alpha} \}_{\alpha \in \complex}$ (not necessarily Gaussian at this stage) on his mode either at time $t = 0$, just after the preparation, or at a later time $t$. Using a Kraus decomposition of $\mathcal{E}_{A}$, it is simple to show that the conditional state that will arrive to Alice, $\rrho_{A}^{(\alpha)}(t)$, is the same in both cases: he doesn't gain anything by waiting, but he can delay his choice of measurement without loosing any power in his task of preparing a nonclassical state at Alice's place. \\

Let us now discuss the effects of such a noisy propagation on triangoloids and on the nonclassical steerability condition, Ineq.~(\ref{eq:nclsteerTMST}). We will assume that the quantum state of the two modes immediately after its preparation is a TMST state with zero thermal noise on both modes, i.e. it is a twin-beam state (TWB, also known as two-mode squeezed vacuum state):
\begin{equation}
    \vert r \rangle \rangle  \ \ = \ \ e^{r ( \hat{a}^{\dagger} \hat{b}^{\dagger} - \hat{a} \hat{b} )} \ \vert 0 \rangle_{A} \otimes \vert 0 \rangle_{B}
\end{equation}
with $r \in \reals^{+}$. The TWB states are maximally entangled states of two modes at fixed energy. Indeed, they can be written in the number eigenbasis of the two modes as:
\begin{equation}
    \vert r \rangle \rangle \ \ = \ \ \sqrt{ 1 - \lambda^{2}} \ \sum_{n=0}^{\infty} \ \lambda^{n} \ \vert n \rangle \otimes \vert n \rangle
\end{equation}
where $\lambda = \tanh r$, hence they manifest perfect correlations in photon-counting measurements and they provide one of the few ways to generate higher photon number states. Their entanglement negativity is $\epsilon_{\text{\tiny{TWB}}} [ r ] = 2 r$, hence all TWB states are entangled, as long as $r > 0$. Moreover, they are always nonclassically steerable because they have $N_{A} = N_{B} = 0$. Their triangoloids have a right-angled red vertex, in $\mu_{c} = \mu_{sc} = 1$, as in Fig.~\ref{fig:triangTWB}. \\

\begin{figure}
\centering
\includegraphics[width=0.5 \textwidth]{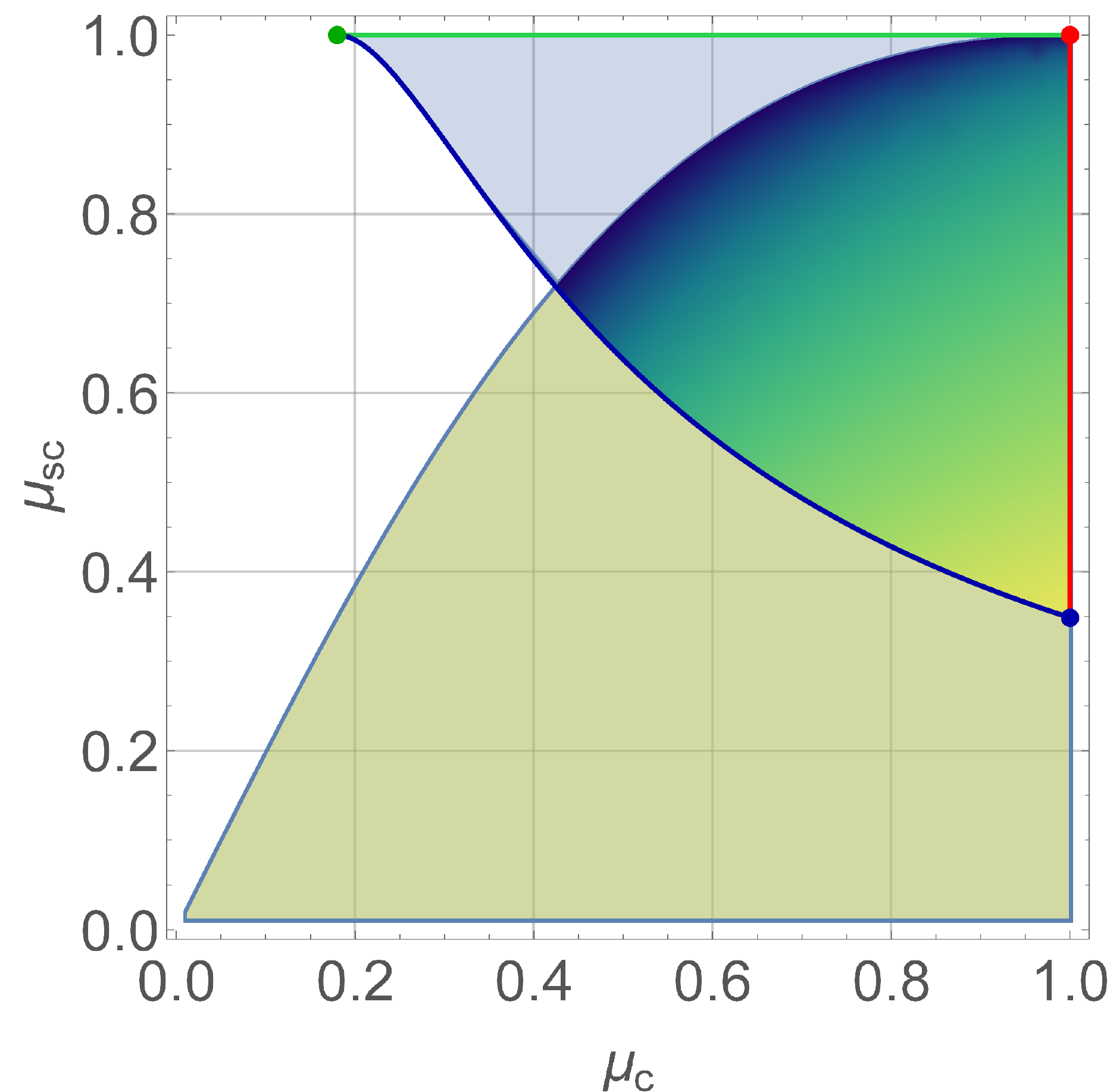}
$\quad $
\caption{\label{fig:triangTWB}
 Triangoloid for TWB state, with $\mu_{A} = \mu_{B} = 1$ and $r = 1.2$.}
\end{figure}

The noisy propagation of mode $A$ of the TWB state
inside an optical medium can be modeled by a coupling of the mode with a non-zero temperature reservoir, i.e. a bath of infinitely many, decoupled oscillators thermalized at the same temperature \cite{cola03}. The dynamics can be described in terms of the Master equation, also known as \emph{Lindblad Equation}, which for an $n$-mode state $\rrho$ reads:
\begin{equation}
    \label{eq:master}
    \dfrac{ \mathrm{d} \rrho (t) }{ \mathrm{d} t}   =   \sum_{k=1}^{n} \frac{\Gamma_{k}}{2} \left\{ ( N_{th,k} + 1) \mathcal{L} [ \hat{a}_{k} ]  +  N_{th,k} \mathcal{L} [ \hat{a}^{\dagger}_{k} ]    \right\} \rrho(t)
\end{equation}
where $\Gamma_{k} \geq 0$ is the damping rate for the $k$-th mode, taking into account the couplings between the bath and the mode, $N_{th,k} \in \reals^{+}$ is the mean photon-number density per unit frequency around the frequency of mode $k$ interacting with the bath, and $\mathcal{L}$ is the \emph{Lindblad superoperator}:
\begin{equation} 
\mathcal{L} [ \hat{O} ] \rrho \ = \ 2 \hat{O} \rrho \hat{O}^{\dagger}  -  \hat{O}^{\dagger} \hat{O} \rrho  -  \rrho \hat{O}^{\dagger} \hat{O}
\end{equation}
Passing to the phase space formalism through a differential representation of the mode operators in Eq.~(\ref{eq:master}), one can derive a Fokker-Planck equation \cite{qopt} for the Wigner function of $\rrho(t)$. When the initial state is Gaussian, it will stay Gaussian throughout the evolution and a simple solution can be derived for the time evolution $\ssigma_{t}$ of the CM of $\rrho (t)$:
\begin{subequations}
\begin{equation}
\label{eq:CMnoisy}
    \ssigma_{t} \  =  \ \mathds{G}^{1/2}_{t}  \ssigma_{0} \mathds{G}^{1/2}_{t} \ + \ \left( \Id_{2n} - \mathds{G}_{t} \right) \ssigma_{\infty}
\end{equation}
\begin{equation}
    \mathds{G}_{t} \ \coloneqq  \ \bigoplus_{k=1}^{n} e^{- \Gamma_{k} t} \Id_{2} 
\end{equation}
\begin{equation}
    \ssigma_{\infty} \  \coloneqq \  \bigoplus_{k=1}^{n} \left( N_{k} + \frac{1}{2} \right) \Id_{2}
\end{equation}
\end{subequations}
In Eq.~(\ref{eq:CMnoisy}), $\ssigma_{0}$ is the initial CM, while $\ssigma_{t}$ is the CM after a propagation time $t$ and $\ssigma_{\infty}$ is the asymptotic CM, corresponding to complete thermalization of each mode of the state with the corresponding bath of oscillators.\\

In our case, we shall assume that only mode $A$ interacts with a bath, therefore $\Gamma_{B} = 0$ and we rename $\Gamma_{A} = \Gamma$. We can also call $N_{th}$ the average number density of thermal photons in the bath at the frequency of mode $A$. Moreover, the initial CM $\ssigma_{0}$ is the CM of a TWB state, which is in canonical form with parameters:
\begin{equation}
\label{eq:canonTWB}
\begin{aligned}
& a^{0} \ = \ b^{0} \ = \ N_{s} + \frac{1}{2} \\
& c^{0}_{1} \ = \ - c^{0}_{2} \ = \ \sqrt{ N_{s} ( 1+ N_{s})}
\end{aligned}
\end{equation}
and $N_{s} = \sinh^{2} r$ as usual. Inserting the corresponding $\ssigma_{0}$, $\mathds{G}_{t} = \left( e^{-\Gamma t} \Id_{2} \right) \oplus \Id_{2}$ and $\ssigma_{\infty} = \left( N_{th} + \frac{1}{2} \right) \Id_{4}$ in Eq.~(\ref{eq:CMnoisy}), we find the CM of the two modes at time $t$:
\begin{equation}
\label{eq:abcTWBth1}
    \ssigma_{t} \ \ = \ \ \left(  \begin{array}{cccc} a' & 0 & c' & 0 \\ 0 & a' & 0 & -c' \\  c' & 0 & b' & 0 \\ 0 & -c' & 0 & b'     \end{array} \right)
\end{equation}
with time-dependent parameters $a'$, $b'$ and $c'$ given by:
\begin{subequations}
\label{eq:abcTWBth2}
\begin{align}
    a'(t) &=  N_{th} + \frac{1}{2}  + e^{-\Gamma t} (N_{s} - N_{th})\,, \\
    b' &=  N_{s} + \frac{1}{2}\,, \\
    c'(t) &=  \sqrt{ e^{-\Gamma t} N_{s} ( 1 + N_{s} )}\,.
    \end{align}
\end{subequations}
The initial TWB state, after propagation of mode $A$ for a time $t$ in the thermal environment, has become a generic TMST state, with a CM $\ssigma_{t}$ in canonical form with ${c'}_{1} = - {c'}_{2} = c'$. We can now compare Eq.~(\ref{eq:TMSTparam}) with Eq.~(\ref{eq:abcTWBth2}) to get the new purity parameters of the two modes, ${\mu'}_{A}$ and ${\mu'}_{B}$, and the new two-mode squeezing parameter $r'$. The result is:
\begin{subequations}
\label{eq:noisedparam}
\begin{equation}
{\mu'}_{A} (t) \ = \ \dfrac{e^{\Gamma t}}{(N_{s} - N_{th})(1 - e^{\Gamma t}) + \sqrt{\left[ N_{s} - N_{th} + e^{\Gamma t}( 1 + N_{s} + N_{th}) \right]^{2} - 4 e^{\Gamma t} N_{s} ( 1 + N_{s}) }}\,, \end{equation}
\begin{equation} {\mu'}_{B} (t) \ = \ \dfrac{e^{\Gamma t}}{(N_{th} - N_{s})(1 - e^{\Gamma t}) + \sqrt{\left[ N_{s} - N_{th} + e^{\Gamma t}( 1 + N_{s} + N_{th}) \right]^{2} - 4 e^{\Gamma t} N_{s} ( 1 + N_{s}) }}\,, \end{equation}
\begin{equation} r'(t) \ = \ \frac{1}{2} \ \mathrm{arccosh} \left[ \frac{ N_{s} - N_{th} + e^{\Gamma t}( 1 + N_{s} + N_{th})}{\sqrt{\left[ N_{s} - N_{th} + e^{\Gamma t}( 1 + N_{s} + N_{th}) \right]^{2} - 4 e^{\Gamma t} N_{s} ( 1 + N_{s}) }}    \right]\,.
    \end{equation}
\end{subequations}
All these quantities decrease monotonically with propagation time $t$. While $r'(t)$ drops to $0$, implying that the state asymptotically becomes factorized, ${\mu'}_{A}(t)$ and ${\mu'}_{B}(t)$ approach asymptotic values given by:
\begin{subequations}
\begin{equation}
\label{eq:mualim}
    \lim_{t \to + \infty} {\mu'}_{A}(t) \ = \ \frac{1}{1+ 2 N_{th}}\,,
\end{equation}
\begin{equation}
    \lim_{t \to +\infty} {\mu'}_{B}(t) \ = \ \frac{1}{1+ 2 N_{s}}\,.
\end{equation}
\end{subequations}
We can put to use Ineq.~(\ref{eq:nclsteerTMST}) to decide whether the state after propagation of mode $A$ for a time $t$ is still nonclassically steerable or not. Computing the nonclassical steerability $\boldsymbol{\varsigma}_{A \vert B}$ from Eq.~(\ref{eq:defstigmaTMST}) and Eq.~(\ref{eq:noisedparam}), we find the \emph{maximum propagation time for nonclassical steering}, $t_{\text{ns}}$, after which Bob can no longer steer Alice's mode into a nonclassical state:
\begin{equation}
     t_{\text{ns}} \ \ = \ \ \frac{1}{\Gamma}  \ \log \left[ 1 + \frac{ N_{s}}{N_{th} ( 1 + 2 N_{s})}    \right]\,.
\end{equation}
In general, $t_{\text{ns}}$ is smaller than the maximum time $t_{\text{ent}}$ after which the modes are no longer entangled, which was computed for example in \cite{prauz04} for the case of a TWB having both modes interacting with reservoires at the same temperature and with equal damping rates. We explicitly calculated $t_{\text{ent}}$ for our situation using entanglement negativity:
\begin{equation}
    \label{eq:noisedent}
    t_{\text{ent}} \ \ = \ \ \frac{1}{\Gamma} \ \log \left( 1 + \frac{1}{N_{th}} \right)\,.
\end{equation}
Somehow surprisingly, it does not depend on $N_{s}$ and it is always greater than the upper bound on $t_{\text{ns}}$, even in the limit of infinite initial entanglement, $N_{s} \to +\infty$. \textcolor{red}{A similar result has been obtained in \cite{pmar15} (in particular, see eq.~(5.3) therein and the subsequent discussion)}. Here we stress the fact that there is always a non-zero time lapse, between $t_{\text{ent}}$ and $t_{\text{ns}}$, during which we observe entangled TMST states that are \emph{not} nonclassically steerable: this is in perfect agreement with our result, Proposition \ref{propos:TMSTent}, reinforcing the idea that being nonclassically steerable is a stronger condition than entanglement for TMST states, and shows that such states are not rare and odd exceptions, but they arise quite naturally. \\

\begin{figure}
\centering
\includegraphics[width=0.5 \textwidth]{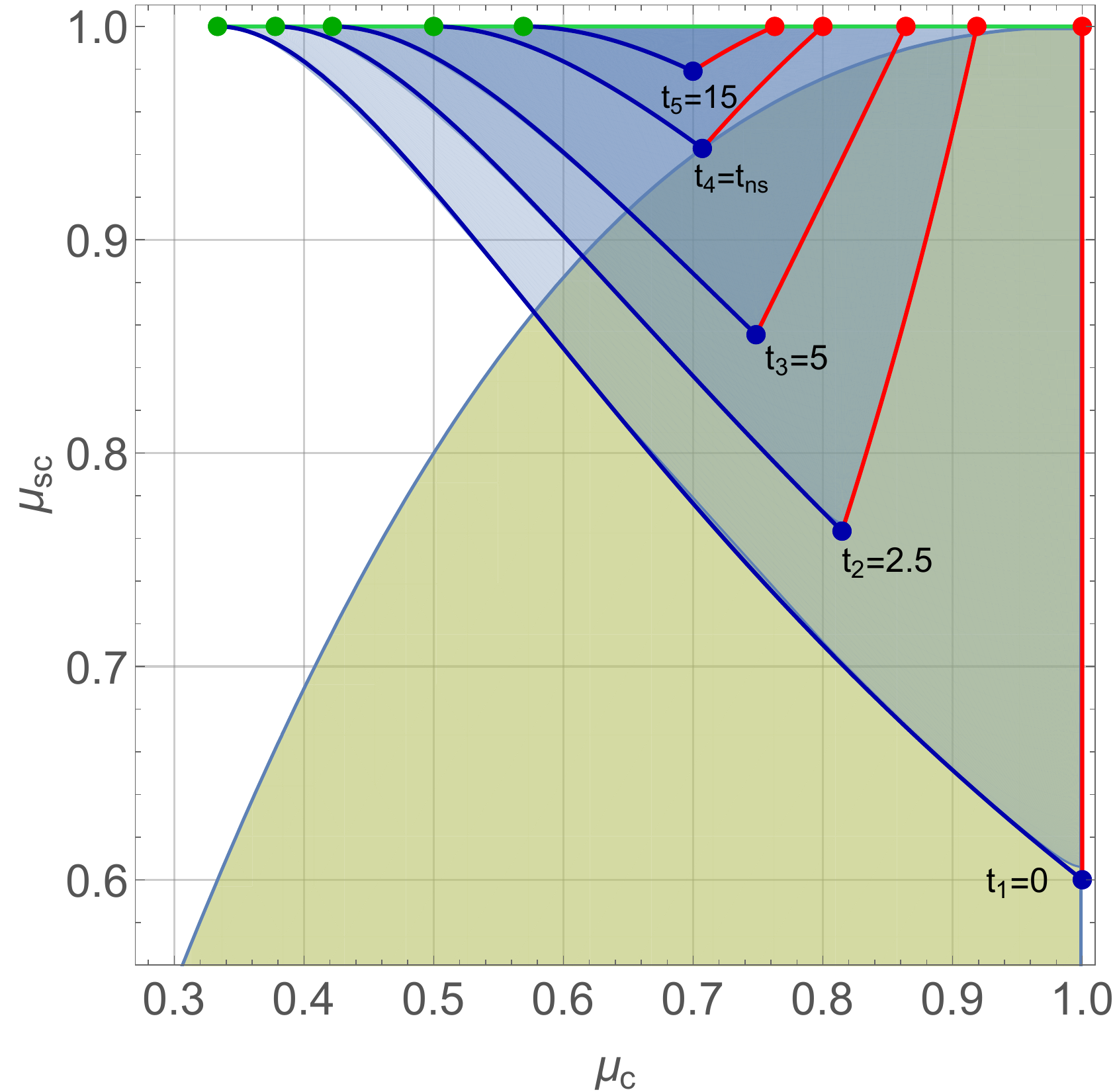}
$\quad $
\caption{\label{fig:triangnoisy}
Time sequence of triangoloids, starting from a TWB state with $N_{s} = \sinh^{2} r = 1$. The damping rate of the noisy channel acting on mode $A$ is $\Gamma = 0.1$ and the average number density of thermal photons is $N_{th} = 0.2$. At $t= t_{\text{ns}}$ and for all greater times, the overlap with the nonclassical region (light-brown) vanishes.}
\end{figure}

We can exploit the triangoloid plots to monitor the evolution of the set of conditional states that can be prepared on mode $A$ at any given time (or, equivalently, the evolution of the set of conditional states generated just after the preparation of the TWB). In Fig.~\ref{fig:triangnoisy} we depicted a time sequence of triangoloids arising from an initial TWB state with $N_{s} = 1$, for a damping rate $\Gamma = 0.1$ and $N_{th} = 0.2$. They shrink progressively, until they completely get out of the nonclassical region at $t = t_{\text{ns}}$. At later times, they continue to contract towards a point on the upper, green side; indeed, for $t \to +\infty$, the two-mode state becomes factorized and the state of mode $A$ cannot be conditioned by measurements on mode $B$, being just a thermal state with purity given by Eq.~(\ref{eq:mualim}). Note that the impression that they contract without drifting is not true in general, but only for some choices of $N_{s}$ and $N_{th}$.
\section{\label{sec:generaliz}The notion of nonclassical steering for a 
generic two-mode Gaussian state}
The main conceptual difficulty we may encounter in generalizing the idea of nonclassical steering to all two-mode Gaussian state arises from single-mode squeezing: for a general Gaussian state $\rrho_{AB}$ of two modes, the \emph{unconditional} quantum state $\rrho_{A} = \Tr_{B} [ \rrho_{AB} ]$ of mode $A$ may already be nonclassical due to single-mode squeezing. However, as we are trying to capture a type of \emph{quantum correlations}, we should be able to perform local unitary operations without affecting them. In our context, we may freely perform Local Gaussian Unitary Transformations (LGUTs) on the two modes in order to bring $\rrho_{AB}$ into a 
simpler form. In particular it is well understood that, by means of LGUTs, any two-mode Gaussian state 
can always be brought into the so-called \emph{canonical form} \cite{sera04, DGCZ00, olivares}, for which 
the generic CM $\ssigma$, written in block form as in Eq.~(\ref{eq:blockform}), has:
\begin{equation}
\label{eq:canonform}
\mathbf{A} = a \cdot \Id_{2}, \ \ \ \  \mathbf{B} = b \cdot \Id_{2}, \ \ \ \  \mathbf{C} = \diag ( c_{1}, c_{2} )\,.
\end{equation}
Here $a,b,c_{1}, c_{2} \in \reals$ are truly independent real parameters, but they nevertheless have to obey the constraints required by the positivity of $\ssigma$ and by Ineq.~(\ref{eq:uncert}). The UR imply that $a,b \geq \frac{1}{2}$, hence the unconditional states $\rrho_{A} = \Tr_{B}[\rrho_{AB}]$ and $\rrho_{B} = \Tr_{A}[ \rrho_{AB}]$ of both modes are still classical, for any two-mode Gaussian state $\rrho_{AB}$ in canonical form. This observation let us suggest the following generalization of Def. \ref{def:nclsteerTMST}:

\begin{defn}
\label{def:wnscanon}
A two-mode Gaussian state $\rrho_{AB}$ in canonical form is called \emph{weakly nonclassically steerable} (WNS) from mode $B$ to mode $A$ ($B \rightarrow A$) 
if there exists a Gaussian positive operator-valued measure (POVM) $ \{ \hat{\boldsymbol{\Pi}_{\alpha} } \}_{\alpha \in \complex }$ on mode $B$ 
such that the \emph{conditional state of mode $A$}, $\rrho_{A}^{(\alpha)}$,
is a \emph{nonclassical state}. 
\end{defn}

We can also directly generalize Proposition \ref{thm:nclsteerTMST}:
\begin{propos}
\label{propos:WNScanon}
The least classical conditional 
state $\rrho_{A}^{(\alpha)}$ of mode $A$ resulting from conditioning upon Gaussian measurements on 
mode $B$ of a two-mode Gaussian state $\rrho_{AB}$ in canonical form is generated
by a field-quadrature measurement on mode $B$, either of the $\hat{x}_{B}$ quadrature if 
$\vert c_{2} \vert \geq \vert c_{1} \vert$, or of the $\hat{p}_{B}$ quadrature otherwise. In particular, $\rrho_{AB}$ is WNS ($B \rightarrow A$) if and only if the parameters of its CM satisfy:
\begin{equation}
\label{eq:WNScanon}
    a -  \frac{c^{2}}{b} \  < \ \frac{1}{2}\,, \quad c = \max \{ \vert c_{1} \vert, \vert c_{2} \vert \}\,.
\end{equation}
\end{propos}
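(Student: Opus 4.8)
The plan is to optimise the single-mode nonclassicality criterion Ineq.~(\ref{eq:nonclreg}) directly over the family of admissible measurement covariance matrices $\ssigma_M$, rather than working first with the explicit parametrisation $(\mu,\mu_s,\phi)$. Writing the conditional covariance matrix from Eq.~(\ref{eq:condCM}) for the canonical form as $\ssigma_A^c = a\,\Id_2 - K$, with $K \coloneqq \mathbf{C}(b\,\Id_2 + \ssigma_M)^{-1}\mathbf{C}$ and $\mathbf{C} = \diag(c_1,c_2)$, the least eigenvalue becomes $\lambda_- = a - \lambda_{\max}(K)$. Hence the conditional state is least classical exactly when $\lambda_{\max}(K)$ is largest, and the whole problem reduces to maximising $\lambda_{\max}(K)$ over all physical $\ssigma_M$. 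Note that, unlike the TMST case, the orientation of the measurement now matters, so the optimisation is genuinely over the direction as well.

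The first key step is an operator (Loewner-order) bound. Every single-mode measurement CM is positive definite, $\ssigma_M \succ 0$, so $b\,\Id_2 + \ssigma_M \succ b\,\Id_2$ and therefore $(b\,\Id_2 + \ssigma_M)^{-1} \prec \tfrac{1}{b}\Id_2$. Conjugating by $\mathbf{C}$ gives $K \prec \tfrac1b \mathbf{C}^2 = \tfrac1b\diag(c_1^2,c_2^2)$, whence $\lambda_{\max}(K) < \tfrac{1}{b}\max\{c_1^2,c_2^2\} = \tfrac{c^2}{b}$ with $c = \max\{|c_1|,|c_2|\}$. This already yields the universal lower bound $\lambda_- > a - c^2/b$, valid for every Gaussian measurement.

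The second step is to show the bound is tight and is approached precisely in the field-quadrature (homodyne) limit. Taking a pure measurement ($\mu = 1$, so $\det\ssigma_M = 1/4$) and letting the squeezing diverge ($\mu_s \to 0$) along the quadrature axis carrying the dominant correlation, $\ssigma_M$ blows up on one eigendirection and collapses on the orthogonal unit vector $\hat n$, so that $(b\,\Id_2 + \ssigma_M)^{-1} \to \tfrac1b\,\hat n\hat n^{T}$ and $K \to \tfrac1b(\mathbf{C}\hat n)(\mathbf{C}\hat n)^{T}$, a rank-one matrix with largest eigenvalue $\tfrac1b|\mathbf{C}\hat n|^2$. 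Choosing $\hat n$ along the axis of $\max\{|c_1|,|c_2|\}$, i.e.\ homodyning the mode-$B$ quadrature most strongly correlated with mode $A$, maximises $|\mathbf{C}\hat n|^2 = c^2$, so $\lambda_{\max}(K) \to c^2/b$ and $\lambda_- \to a - c^2/b$. To promote ``the homodyne limit attains the infimum'' into ``field-quadrature measurements are optimal among all Gaussian measurements'', I would also establish monotonicity: by the same order-reversing argument, shrinking the measured-quadrature variance at fixed purity increases $(b\,\Id_2+\ssigma_M)^{-1}$ monotonically in the relevant direction, so $\lambda_-$ is non-increasing in the measurement squeezing $r_m$, in exact parallel with Proposition~\ref{thm:nclsteerTMST}.

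Finally I would assemble the biconditional. Since $\lambda_-$ is bounded below by $a - c^2/b$ and approaches it arbitrarily closely, the infimum of $\lambda_-$ over measurements equals $a - c^2/b$. If $a - c^2/b < \tfrac12$, continuity supplies a finite, sufficiently squeezed measurement with $\lambda_- < \tfrac12$, so a nonclassical conditional state exists and $\rrho_{AB}$ is WNS; if $a - c^2/b \geq \tfrac12$, then $\lambda_- > a - c^2/b \geq \tfrac12$ for every measurement, so all conditional states are classical and $\rrho_{AB}$ is not WNS. This establishes Ineq.~(\ref{eq:WNScanon}). The main obstacle I anticipate is not the bound itself but the boundary bookkeeping: the extremum is a genuine limit rather than an attained optimum, so the strict-versus-non-strict inequality and the continuity argument linking the infimum to the existence of a nonclassical conditional state must be handled carefully; a secondary technical point is verifying the monotonicity in $r_m$ cleanly enough to certify that the optimal measurement is a bona fide field-quadrature projective measurement.
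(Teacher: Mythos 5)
Your proposal is correct, and it reaches the paper's conclusion by a genuinely different and arguably cleaner route. The paper works inside the explicit parametrization $(\mu,\mu_{s},\phi)$ of Eq.~(\ref{eq:CMmeas}): it optimizes the measurement phase ``by explicit calculation'', then verifies by inspection of a derivative that $\lambda_{M}$ is monotonically decreasing in $\mu_{s}$, and finally takes the $\mu_{s}\to 0$ limit. Your Loewner-order sandwich replaces both calculus steps at once: $\ssigma_{M}\succ 0$ gives $(b\,\Id_{2}+\ssigma_{M})^{-1}\prec \tfrac{1}{b}\Id_{2}$, hence the uniform strict bound $\lambda_{-}>a-c^{2}/b$ for every nonsingular measurement, while your rank-one homodyne limit shows the bound is tight; note that this makes your separate monotonicity paragraph redundant, since bound plus tightness already certify that quadrature measurements are optimal among all Gaussian measurements (the paper needs the monotonicity check precisely because it lacks the uniform bound, and your argument is also more robust in that it never invokes the specific parametrization of $\ssigma_{M}$). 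Your boundary worry is likewise harmless: the paper treats the spectral measures of the field quadratures as bona fide Gaussian measurements (the $\mu_{s}\to 0$ limit of Eq.~(\ref{eq:condCM}) remains valid there), so the infimum $a-c^{2}/b$ is attained and the strict inequality in Ineq.~(\ref{eq:WNScanon}) requires no continuity argument, although your continuity fallback correctly covers the convention in which only nonsingular $\ssigma_{M}$ are admitted. One labelling remark: with $\mathbf{C}=\diag(c_{1},c_{2})$, your collapsing direction $\hat{n}$ along the axis of $\max\{\vert c_{1}\vert,\vert c_{2}\vert\}$ means homodyning $\hat{p}_{B}$ when $\vert c_{2}\vert \geq \vert c_{1}\vert$ (the quadrature pair carrying the dominant correlation), whereas the Proposition names $\hat{x}_{B}$ in that case; under the paper's own parametrization, the optimal choice $\phi=0$, $\mu_{s}\to 0$ collapses the effect variance along the $p$ axis, so your labelling is the internally consistent one, and the discrepancy is a pure naming convention that selects the same $\vert c_{i}\vert$ and leaves the criterion of Ineq.~(\ref{eq:WNScanon}) unchanged.
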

\begin{proof}
Inserting equations (\ref{eq:canonform}) for the canonical form in the general formula (\ref{eq:condCM}) for the conditional CM $\ssigma_{A}^{c}$, we observe that,
since $\mathbf{A}$ is diagonal, the smallest eigenvalue of $\ssigma_{A}^{c}$ is minimized (over all possible CMs $\ssigma_{M}$) when the greatest eigenvalue $\lambda_{M}$ of $\mathbf{C}^{T} ( \mathbf{B} + \ssigma_{M} )^{-1} 
\mathbf{C}$ attains its supremum, which is positive semidefinite, hence $\lambda_{M} \geq 0$. By explicit calculation, to maximize $\lambda_{M}$ the measurement's phase has to be $\phi = 0$ if $\vert c_{2} \vert \geq \vert c_{1} \vert$, 
and $ \phi = \pi$ otherwise. Once the phase is settled to one of these values, $\lambda_{M}$ is a monotonic decreasing function of $\mu_{s}$, as can be checked by inspection of its first derivative with respect to $\mu_{s}$. Therefore, $\lambda_{M}$ is further maximized in the limit $\mu_{s} \to 0$, for which the value of $\mu (\neq 0)$ becomes irrelevant and the Gaussian POVM $\hat{\boldsymbol{\Pi}}_{\alpha}$ reduces to the spectral measure of the $\hat{x}$ quadrature for $\phi = 0$, and of the $\hat{p}$ quadrature for $\phi = \pi$.
As for Eq.~\ref{eq:WNScanon}, note that if $ c = \vert c_{2} \vert \geq \vert c_{1} \vert$, then we can fix the POVM's phase to $\phi = 0$ and, for $\mu_{s} \to 0$, we explicitly work out the minimum of the smallest eigenvalue of $\ssigma^{c}_{A}$:
\[ \min \left\{ \lambda_{m}  \right\}\ =  \ a - \frac{c^{2}}{b}\,. \]
This has to fulfill $\min \{ \lambda_{m} \} < 1/2$ in order for the state $\rrho_{AB}$ to be nonclassically steerable, as stated by Eq.~(\ref{eq:WNScanon}). Otherwise, 
if $ c = \vert c_{1} \vert > \vert c_{2} \vert$, we choose $\phi = \pi$ to arrive at the same conclusion.
\end{proof}

In switching from TMST states to generic states in canonical form, we called \emph{weak} this generalized notion of nonclassical steering. The reason is that it does not imply entanglement, as we showed with some examples of parameters and also with explicit constructions of separable states that are nevertheless WNS (see Appendix \ref{apx:entWNS}). A question may arise now on whether WNS is related to a more general class of quantum correlations, such as Gaussian Quantum Discord (GQD) \cite{zurek01,ABC16,paris10,ades10,HGR15}. We remark that Gaussian states with zero GQD, being  factorized, are obviously \emph{not} WNS. A reasonable guess, however, could be to expect a strictly positive lower bound to GQD for states exhibiting WNS. By construction of explicit counterexamples, we showed that this is not the case (see Appendix \ref{apx:GQDWNS}). \\

\par
Motivated by these findings, we shall introduce a tighter notion of nonclassical steering:
\begin{propos}
\label{propos:SNScanon}
A two-mode Gaussian state $\rrho_{AB}$ in canonical form is called \emph{strongly nonclassically steerable} (SNS) from mode $B$ to mode $A$ if \emph{any} field-quadrature measurement on mode $B$ generates a nonclassical conditional state
of mode $A$. A necessary and sufficient condition for $\rrho_{AB}$ to be SNS is:
\begin{equation}
\label{eq:SNScanon}
    a - \frac{c'^{2}}{b} \ < \  \frac{1}{2}\,, \quad {c'} = 
    \min \{ \vert c_{1} \vert, \vert c_{2} \vert \}
\end{equation}
where $a, b,c_{1}, c_{2}$ are the parameters of its CM in canonical form.
\end{propos}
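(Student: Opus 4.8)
The plan is to track the least eigenvalue $\lambda_-$ of the conditional covariance matrix $\ssigma^c_A$ as the \emph{measured quadrature is varied}, and then demand that \emph{every} quadrature leave the conditional state nonclassical. Since the definition of SNS quantifies only over field-quadrature measurements, I would first observe that these constitute a one-parameter family labelled by the measurement phase $\phi$: each one is the homodyne limit $\mu_{s} \to 0$ (with $\mu \neq 0$) of the Gaussian POVM in Eq.~(\ref{eq:CMmeas}) at a fixed $\phi$. The whole statement then reduces to controlling $\lambda_-(\phi)$ over $\phi \in [0,2\pi)$.

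First I would take the homodyne limit of Eq.~(\ref{eq:condCM}) for generic $\phi$. As $\mu_{s} \to 0$ one eigenvalue of $\ssigma_{M}$ diverges while the orthogonal one vanishes, so that $(\mathbf{B}+\ssigma_{M})^{-1} \to \tfrac{1}{b}\,\mathbf{m}_{\phi}\mathbf{m}_{\phi}^{T}$, with $\mathbf{m}_{\phi} = (\sin(\phi/2),\cos(\phi/2))^{T}$ the unit eigenvector of $\ssigma_{M}$ belonging to the vanishing eigenvalue. Feeding this into Eq.~(\ref{eq:condCM}) and using $\mathbf{C} = \diag(c_{1},c_{2}) = \mathbf{C}^{T}$ yields the rank-one downward perturbation
\begin{equation*}
\ssigma^{c}_{A} \ = \ a\,\Id_{2} \ - \ \frac{1}{b}\,\mathbf{w}(\phi)\,\mathbf{w}(\phi)^{T}\,, \qquad \mathbf{w}(\phi) \ = \ \big( c_{1}\sin(\phi/2),\ c_{2}\cos(\phi/2) \big)^{T}\,,
\end{equation*}
whose two eigenvalues are $a$ and $\lambda_-(\phi) = a - \tfrac{1}{b}\lvert\mathbf{w}(\phi)\rvert^{2}$ with $\lvert\mathbf{w}(\phi)\rvert^{2} = c_{1}^{2}\sin^{2}(\phi/2)+c_{2}^{2}\cos^{2}(\phi/2)$; since $a\geq\tfrac12$, the latter is the eigenvalue governing nonclassicality. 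As a check, $\phi=0$ and $\phi=\pi$ reproduce the two homodyne values $a-c_{2}^{2}/b$ and $a-c_{1}^{2}/b$ already used in Proposition~\ref{propos:WNScanon}.

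Next I would impose the SNS requirement. By the Gaussian P-nonclassicality criterion (Ineq.~(\ref{eq:nonclreg})), the conditional state is nonclassical precisely when $\lambda_-(\phi) < \tfrac12$; hence SNS ($B\to A$) demands $\lambda_-(\phi)<\tfrac12$ for \emph{every} $\phi$, equivalently $\max_{\phi}\lambda_-(\phi) < \tfrac12$ (the maximum is attained because $\lambda_-$ is continuous on a compact phase interval). Maximising $\lambda_-$ amounts to minimising the convex combination $\lvert\mathbf{w}(\phi)\rvert^{2} = c_{2}^{2} + (c_{1}^{2}-c_{2}^{2})\sin^{2}(\phi/2)$, whose minimum over $\phi$ is plainly $\min\{c_{1}^{2},c_{2}^{2}\} = (c')^{2}$ with $c'=\min\{\lvert c_{1}\rvert,\lvert c_{2}\rvert\}$. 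Therefore $\max_{\phi}\lambda_-(\phi) = a-(c')^{2}/b$, and SNS holds if and only if $a-(c')^{2}/b<\tfrac12$, which is exactly Ineq.~(\ref{eq:SNScanon}).

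The hard part will be the homodyne limit at generic phase. The proof of Proposition~\ref{propos:WNScanon} only needed $\phi=0,\pi$, for which the measured quadrature aligns with the principal axes of $\mathbf{C}$ and $\ssigma^{c}_{A}$ stays diagonal; for arbitrary $\phi$ the quadrature is rotated, $\ssigma^{c}_{A}$ develops off-diagonal entries, and I must verify that the limiting matrix retains the clean rank-one form above and that $\lambda_-(\phi)$ is correctly identified as its smaller eigenvalue (in particular that sub-leading terms in $\mu_{s}$ drop out of the limit). Once that limit is pinned down, the remaining optimisation over $\phi$ is elementary, since $\lvert\mathbf{w}(\phi)\rvert^{2}$ merely interpolates convexly between $c_{1}^{2}$ and $c_{2}^{2}$.
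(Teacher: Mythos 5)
Your proof is correct and follows essentially the same route as the paper's: both reduce SNS to the one-parameter family of homodyne limits and require nonclassicality at the worst-case phase, which the paper identifies as the ``wrong'' choice ($\phi=\pi$ when $\vert c_{2}\vert \geq \vert c_{1}\vert$, $\phi=0$ otherwise) by appeal to the calculation in Proposition~\ref{propos:WNScanon}. Your explicit rank-one form $\ssigma^{c}_{A} = a\,\Id_{2} - \tfrac{1}{b}\,\mathbf{w}(\phi)\mathbf{w}(\phi)^{T}$ and the convex interpolation of $\lvert\mathbf{w}(\phi)\rvert^{2}$ between $c_{1}^{2}$ and $c_{2}^{2}$ simply make self-contained the phase optimisation that the paper leaves implicit.
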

\begin{proof}
We follow the proof of Proposition \ref{propos:WNScanon}.
Among all quadrature measurements on mode $B$, the one leading to the least nonclassical conditional state of mode $A$ corresponds to the ``wrong'' choice of measurement's phase ($\phi = \pi$ for $\vert c_{2} 
\vert \geq \vert c_{1} \vert$ and $\phi = 0$ otherwise). Therefore, it suffices to require the smallest eigenvalue of $\ssigma_{A}^{c}$ to be smaller
than $\frac{1}{2}$ also in this case, which gives Ineq.~(\ref{eq:SNScanon}).
\end{proof}

Comparing Ineq.~(\ref{eq:WNScanon}) and Ineq.~(\ref{eq:SNScanon}), we immediately conclude that weak and strong nonclassical steering coincide precisely for the class of TMST states, since they are all and only those states in canonical form with $c_{1} = - c_{2}$.\\

We now seek a generalization of weak and strong nonclassical steering to \emph{all} Gaussian states of two modes. We recall once again that any two-mode Gaussian state can be brought to its \emph{unique} canonical form through LGUTs without altering the correlations, thus we can extend the definitions in 
the following way:
\begin{defn}
A generic two-mode Gaussian state $\rrho_{AB}$ is called weakly (strongly) nonclassically steerable if the \emph{unique} Gaussian state ${\rrho'}_{AB}$ 
\emph{in canonical form} related to $\rrho_{AB}$ by LGUTs is weakly (strongly) nonclassically steerable. 
\end{defn}
As for the results regarding the necessary and sufficient conditions for WNS/SNS, we have to define the effect of LGUTs on $\ssigma_{A}^{c}$. Given that any Gaussian unitary transformation is implemented on phase space by a symplectic linear transformation and vice versa, a LGUT on a two-mode system is described by a direct sum $S_{A} \oplus S_{B}$ of $2 \times 2$ matrices acting on quantum phase space, where $S_{A(B)} \in \SL_{A(B)}(2)$. The $2 \times 2$ blocks of a generic CM $\ssigma$, written as in Eq.~(\ref{eq:blockform}) transform according to:
\begin{equation}
    \mathbf{A'} = S_{A} \mathbf{A} S_{A}^{T}\,, \ \ \  \mathbf{B'} = S_{B} \mathbf{A} S_{B}^{T}\,, \ \ \  \mathbf{C'} = S_{A} \mathbf{C} S_{B}^{T}\,.
\end{equation}
If $S_{A} \oplus S_{B}$ brings the initial $\ssigma$ in canonical form, then we have $\mathbf{A'} = a' \cdot \Id_{2}$, $\mathbf{B'} = b' \cdot \Id_{2}$ and $\mathbf{C'} = \diag ( {c'}_{1}, {c'}_{2} )$. We can rearrange the conditional CM $\ssigma_{A}^{c}$ resulting from a Gaussian measurement with CM $\ssigma_{M}$ on the initial state with CM $\ssigma$ as:
\begin{equation}
    \ssigma_{A}^{c} = S_{A}^{T} \left[ \mathbf{A'} - \mathbf{C'} \left( \mathbf{B'} + {\ssigma'}_{M} \right)^{-1} \mathbf{C'}^{T}  \right]  S_{A}
\end{equation}
where we redefined the CM of the measurement as ${\ssigma'}_{M} = S_{B}^{T} \ssigma_{M} S_{B}$. We deduce that, for what concerns the conditional state of mode $A$, the measurement associated with $\ssigma_{M}$ acts on the two-mode state with CM $\ssigma$ in the same way as the measurement ${\ssigma'}_{M}$ acts on the canonical form state related to $\ssigma$, followed by a transformation induced by $S_{A}$ on the resulting conditional CM. Hence, the action of $S_{A}$ does not interfere with the steering process and we can simply factor it out. At the same time, as long as $S_{B}$ doesn't involve infinite squeezing, we can still reproduce the desired limit of ${\ssigma'}_{M}$, acting on the state in canonical form, by an infinite squeezing limit of $\ssigma_{M}$ with a suitable phase. We can finally replace $a,b,c_{1}, c_{2}$ in Ineq.~(\ref{eq:WNScanon}) and Ineq.~(\ref{eq:SNScanon}) with their expressions in terms of symplectic invariants \cite{sera04} to arrive at the most general form of the necessary and sufficient conditions for WNS and SNS:
 \begin{equation}
 \begin{aligned}
     & I_{1}  = a^{2} \ , \ \ \  I_{2}  = b^{2}\,, \\ 
     & I_{3} = c_{1} c_{2} \ , \ \ \ I_{4} = ( ab - {c_{1}}^{2} ) (ab - {c_{2}}^{2} )\,.
\end{aligned}
\end{equation}
Indeed, these are the only independent combinations of the canonical parameters $a,b,c_{1}, c_{2}$ that are invariant under all LGUTs. 
\begin{propos}
Given any two-mode Gaussian state $\rrho_{AB}$, it is WNS from mode $B \to A$ 
if and only if its symplectic invariants satisfy the inequality:
\begin{equation}
\label{eq:WNSgeneral}
    \dfrac{ I' \ - \ \sqrt{  {I'}^{2} - 4 I_{1} I_{2} {I_{4}} }}{2 I_{2} \sqrt{I_{1}}} \ < \   \frac{1}{2}
\end{equation}
while it is SNS from mode $B \to A$ 
if and only if they fulfill:
\begin{equation}
\label{eq:SNSgeneral}
    \dfrac{I' \  + \ \sqrt{ {I'}^{2} - 4 I_{1} I_{2} {I_{4}} }}{2 I_{2} \sqrt{I_{1}}} \  < \   \frac{1}{2}
\end{equation}
where $I' = I_{1} I_{2} - {I_{3}}^{2} + {I_{4}}$.
\end{propos}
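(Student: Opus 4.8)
The plan is to treat this final proposition as a purely algebraic translation of the two canonical-form criteria already established in Propositions \ref{propos:WNScanon} and \ref{propos:SNScanon}. The conceptual content---that a LGUT can be factored out of the steering process, so that WNS/SNS are genuinely invariant notions---has already been secured by the discussion preceding the statement. What remains is to rewrite the inequalities $a - c^{2}/b < 1/2$ (with $c = \max\{|c_{1}|,|c_{2}|\}$) and $a - c'^{2}/b < 1/2$ (with $c' = \min\{|c_{1}|,|c_{2}|\}$) entirely in terms of the four LGUT-invariants $I_{1} = a^{2}$, $I_{2} = b^{2}$, $I_{3} = c_{1}c_{2}$, $I_{4} = (ab - c_{1}^{2})(ab - c_{2}^{2})$. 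Since the uncertainty relation forces $a,b \geq 1/2 > 0$, I would immediately set $a = \sqrt{I_{1}}$ and $b = \sqrt{I_{2}}$ with positive roots, so that $ab = \sqrt{I_{1}I_{2}}$.

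First I would observe that each criterion involves $a - c^{2}/b = (ab - c^{2})/b$, so the relevant quantities are $ab - \max\{c_{1}^{2}, c_{2}^{2}\}$ for WNS and $ab - \min\{c_{1}^{2}, c_{2}^{2}\}$ for SNS. I would therefore introduce the pair $s = ab - c_{1}^{2}$ and $t = ab - c_{2}^{2}$, whose product is exactly $st = I_{4}$ by definition. For the sum, using $c_{1}^{2}c_{2}^{2} = I_{3}^{2}$ together with the expansion $I_{4} = (ab)^{2} - ab\,(c_{1}^{2}+c_{2}^{2}) + I_{3}^{2}$ to eliminate $c_{1}^{2}+c_{2}^{2}$, a short computation yields $s + t = 2ab - (c_{1}^{2}+c_{2}^{2}) = (I_{1}I_{2} - I_{3}^{2} + I_{4})/\sqrt{I_{1}I_{2}} = I'/\sqrt{I_{1}I_{2}}$.

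With the symmetric functions $s+t = I'/\sqrt{I_{1}I_{2}}$ and $st = I_{4}$ in hand, I would solve the quadratic $Y^{2} - (s+t)Y + st = 0$ to obtain its two roots
\[ Y_{\pm} = \frac{I' \pm \sqrt{I'^{2} - 4 I_{1} I_{2} I_{4}}}{2\sqrt{I_{1}I_{2}}}\,. \]
The crucial bookkeeping step is the order reversal: because subtracting from the fixed quantity $ab$ flips the ordering, one has $ab - \max\{c_{1}^{2},c_{2}^{2}\} = \min\{s,t\} = Y_{-}$ and $ab - \min\{c_{1}^{2},c_{2}^{2}\} = \max\{s,t\} = Y_{+}$. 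Dividing by $b = \sqrt{I_{2}}$ then turns $a - c^{2}/b < 1/2$ into $Y_{-}/\sqrt{I_{2}} < 1/2$, which is precisely Ineq.~(\ref{eq:WNSgeneral}), and $a - c'^{2}/b < 1/2$ into $Y_{+}/\sqrt{I_{2}} < 1/2$, which is Ineq.~(\ref{eq:SNSgeneral}); here I use $\sqrt{I_{1}I_{2}}\,\sqrt{I_{2}} = I_{2}\sqrt{I_{1}}$ to match the stated denominators.

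The one point requiring care---rather than a genuine obstacle---is the reality of the square root: I would note that the discriminant satisfies $I'^{2} - 4 I_{1}I_{2}I_{4} = I_{1}I_{2}\,(s-t)^{2} \geq 0$, so $Y_{\pm}$ are always real for any physical state, and that the assignment of $Y_{-}$ to WNS (the optimal-phase, most nonclassical conditional state) versus $Y_{+}$ to SNS (the worst-phase quadrature measurement) is consistent with the hierarchy that any SNS state is automatically WNS. I expect no serious difficulty beyond scrupulous tracking of the $\max$/$\min$ correspondence through the subtraction from $ab$ and the final division by $b$.
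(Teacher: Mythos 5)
Your proposal is correct and follows exactly the route the paper intends: the paper leaves this proposition without an explicit proof, asserting only that one ``replaces $a,b,c_{1},c_{2}$'' in Ineqs.~(\ref{eq:WNScanon}) and (\ref{eq:SNScanon}) by their invariant expressions, and your computation (with $s = ab - c_{1}^{2}$, $t = ab - c_{2}^{2}$, $s+t = I'/\sqrt{I_{1}I_{2}}$, $st = I_{4}$, the order reversal $ab - \max\{c_{1}^{2},c_{2}^{2}\} = \min\{s,t\} = Y_{-}$, and the nonnegativity of the discriminant via $I'^{2} - 4I_{1}I_{2}I_{4} = I_{1}I_{2}(s-t)^{2}$) is precisely that substitution carried out in full. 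All steps check, including the denominator identity $\sqrt{I_{1}I_{2}}\,\sqrt{I_{2}} = I_{2}\sqrt{I_{1}}$ and the reliance on the LGUT-factorization argument established in the text preceding the statement.
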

\noindent
Clearly, strong nonclassical steering implies weak nonclassical steering. It deserves its name because it also implies entanglement, but we will prove this indirectly, via a stronger result:
\begin{thm}
\label{thm:SNS-EPRsteer}
Any two-mode Gaussian state $\rrho_{AB}$ that is SNS from mode $B$ to mode $A$ is also EPR-steerable in the same direction. In particular, it must be entangled.
\end{thm}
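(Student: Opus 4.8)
The plan is to prove the implication by reducing the Gaussian EPR-steering criterion, in the $B\to A$ direction, to the very same quadrature-conditioned covariance matrices that already control strong nonclassical steering. First I would recall that, for a two-mode Gaussian state probed by Gaussian measurements, $B\to A$ EPR-steerability is governed by the Schur complement of Bob's block: the state is EPR-steerable from $B$ to $A$ if and only if $\mathbf{A} - \mathbf{C}\,\mathbf{B}^{-1}\,\mathbf{C}^{T}$ fails the single-mode uncertainty relation. Since mode $A$ is a single mode, violating the UR is equivalent to its symplectic eigenvalue dropping below $\tfrac12$, hence to $\det\!\left(\mathbf{A} - \mathbf{C}\,\mathbf{B}^{-1}\,\mathbf{C}^{T}\right) < \tfrac14$. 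Working in the canonical form (\ref{eq:canonform}), with $\mathbf{A}=a\Id_{2}$, $\mathbf{B}=b\Id_{2}$ and $\mathbf{C}=\diag(c_{1},c_{2})$, this Schur complement is diagonal, equal to $\diag\!\left(a-\tfrac{c_{1}^{2}}{b},\,a-\tfrac{c_{2}^{2}}{b}\right)$, so the EPR-steering condition becomes the product inequality $\left(a-\tfrac{c_{1}^{2}}{b}\right)\left(a-\tfrac{c_{2}^{2}}{b}\right) < \tfrac14$.

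The key observation is that the two diagonal entries $\lambda_{1}=a-c_{1}^{2}/b$ and $\lambda_{2}=a-c_{2}^{2}/b$ are exactly the smallest eigenvalues of the conditional covariance matrices obtained, respectively, by homodyning $\hat{x}_{B}$ and $\hat{p}_{B}$: inserting the homodyne limit ($\mu_{s}\to0$, with phase $\phi=0$ or $\phi=\pi$) into (\ref{eq:condCM}) yields $\ssigma_{A}^{c}=\diag(\lambda_{1},a)$ and $\diag(a,\lambda_{2})$, whose least eigenvalues are $\lambda_{1}$ and $\lambda_{2}$ because $a\geq\tfrac12\geq\lambda_{i}$. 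I would then invoke the SNS hypothesis: by Proposition~\ref{propos:SNScanon} and Ineq.~(\ref{eq:SNScanon}), SNS from $B$ to $A$ means precisely that the \emph{least} nonclassical quadrature-conditioned state is already nonclassical, i.e. $\max\{\lambda_{1},\lambda_{2}\}=a-{c'}^{2}/b<\tfrac12$ with $c'=\min\{|c_{1}|,|c_{2}|\}$. Hence both $\lambda_{1}<\tfrac12$ and $\lambda_{2}<\tfrac12$.

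Since each conditional covariance matrix is a bona fide (positive) covariance matrix, $\lambda_{1},\lambda_{2}>0$, so multiplying the two bounds gives $\lambda_{1}\lambda_{2}<\tfrac12\cdot\tfrac12=\tfrac14$, which is exactly the $B\to A$ EPR-steering condition derived above; thus SNS implies EPR-steerability in the same direction. The final clause then follows from the standard fact that every EPR-steerable state is entangled, so SNS implies entanglement a fortiori.

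The main obstacle I anticipate is not the inequality manipulation --- which collapses to the trivial remark that two positive numbers each below $\tfrac12$ have product below $\tfrac14$ --- but rather pinning down the correct form of the Gaussian EPR-steering criterion and matching it cleanly to the conditional covariance objects of Section~\ref{sec:TMST}. In particular, one must justify that the Schur complement $\mathbf{A}-\mathbf{C}\mathbf{B}^{-1}\mathbf{C}^{T}$, whose determinant governs steering, is realized quadrature by quadrature by the very homodyne measurements (conjugate choices $\phi=0$ and $\phi=\pi$) that enter the SNS analysis, and that its positivity, inherited from $\ssigma\geq0$, legitimizes the product argument. This consistency between the Reid-type inference picture and the symplectic-eigenvalue picture is the one point requiring genuine care.
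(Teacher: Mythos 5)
Your proof is correct and follows essentially the same route as the paper's: both reduce to the canonical form, recast $B\to A$ EPR-steerability as the product inequality $\left(a-\tfrac{c_{1}^{2}}{b}\right)\left(a-\tfrac{c_{2}^{2}}{b}\right)<\tfrac14$, and observe that the SNS condition (\ref{eq:SNScanon}) forces both (nonnegative) factors below $\tfrac12$. The only difference is cosmetic: where the paper cites \cite{jone07,kogi15} for this reduction, you derive it yourself via the Schur complement and symplectic-eigenvalue argument, and you make explicit the positivity of the factors that the paper leaves implicit.
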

\begin{proof}
Following \cite{jone07}, a two-mode Gaussian state is EPR-steerable from mode $B$ to mode $A$ by Gaussian measurements if and only if its CM \emph{violates} the inequality:
\begin{equation}
\ssigma \ \ + \ \  \frac{i}{2} \  \boldsymbol{\omega}_{A} \oplus \boldsymbol{\mathbb{0}}_{B} \ \ \geq  \  \ 0
\end{equation} 
where $\ssigma$ is the CM of $\rrho_{AB}$ and $\mathbb{0}_{B}$ is the zero matrix on phase space of mode $B$. Exploiting LGUT-invariance, we can restrict the comparison between EPR-steerability and SNS to Gaussian states in canonical form. In this case, keeping 
in mind that $a > \frac{1}{2}$, violation of the above inequality reduces to \cite{jone07,kogi15}:
\begin{equation}
\label{eq:EPRsteer}
 \left( a - \frac{c_{1}^{2}}{b} \right) \left( a - \frac{c_{2}^{2}}{b} \right) \ < \ \frac{1}{4}
\end{equation}
which is certainly true under the SNS Ineq.~(\ref{eq:SNScanon}). 
\end{proof}

\textcolor{red}{One might wonder whether the EPR-steerability condition Ineq.~(\ref{eq:EPRsteer}) can be fulfilled by \emph{physical} parameters $a,b,c_{1},c_{2}$ that nevertheless violate the SNS condition Ineq.~(\ref{eq:SNScanon}): in other words, if there actually exist two-mode Gaussian states which are EPR-steerable but not strongly nonclassically steerable in the same direction. We confirmed that this is the case with an explicit example, provided in Appendix \ref{apx:EPRSNS}.} It is also clear, from the proof of Theorem \ref{thm:SNS-EPRsteer} we just presented, that TMST states are EPR-steerable from one mode to the other \emph{if and only if} they are nonclassically steerable in the same direction, so that the three notions of steering coincide for them \textcolor{red}{and Ineq.~(\ref{eq:EPRsteer}) for EPR steering takes the simpler form of Ineq.~(\ref{eq:nclsteerTMST}) that we derived for nonclassical steering}. This observation suggests a new, surprising role for the notion of P-nonclassicality: Alice can be certain that the initially shared TMST state was indeed entangled if and only if the conditional state of her mode is nonclassical; we foresee that this fact could find applications in one-sided device-independent quantum key distribution \cite{bran12}, especially in light of the rich variety of techniques developed to detect nonclassicality (see \cite{gebh20} and references therein). Moreover, in light of this observation for TMST states, Proposition \ref{propos:TMSTent} amounts to the well-known fact that EPR-steerability is generally a stronger requirement than entanglement. However, the proof we provided adds quantitative aspects to these considerations: \textcolor{red}{for example, it shows that, for unconstrained local purities, there is no lower bound on two-mode squeezing that guarantees an entangled TMST state to be EPR-steerable, as one can also see from Ineq.~(\ref{ineq:lowerboundsteer}).}\\

\par
As a further comment on the asymmetry of nonclassical steering, note that the WNS/SNS Ineq.~(\ref{eq:WNSgeneral}) from $B$ to $A$ is tighter than the corresponding inequality from $A$ to $B$ (corresponding to exchanging $I_{1}$ and $I_{2}$ in Ineq.~(\ref{eq:WNSgeneral})) if and only if $I_{1} > I_{2}$. But $I_{1}$ and $I_{2}$ are inversely proportional to the squares of the purities of the partial traces \footnote{Note that, even in the case of a TMST state, $\mu_{1} \neq \mu_{A}$ and $\mu_{2} \neq \mu_{B}$ in general.}:
\begin{equation}
    I_{A(B)} \ = \ \frac{1}{4 \mu^{2}_{1(2)}} \ , \ \ \ \ \ \mu_{1(2)} \ = \ \Tr [ \rrho^{2}_{A(B)} ]
\end{equation}
where $\rrho_{A} = \Tr_{B} [ \rrho_{AB} ]$ and $\rrho_{B} = \Tr_{A} [ \rrho_{AB} ]$. Therefore weak and strong nonclassical steering are easier to achieve when measuring the mode with lower purity to influence the mode with higher purity.\\ 

\par
Finally, we should mention some similarities between our results and related works on CV quantum systems. The quantities on the left sides of (\ref{eq:WNScanon}) and (\ref{eq:SNScanon}) are well-known as the \emph{conditional variances} appearing in the Reid EPR-criterion \cite{reid89,reid09}, whose test is already experimentally accessible \cite{ou92,mid10}. This is in agreement with a result stating that quadrature measurements are the best choice for Gaussian EPR steering \cite{kiuk17}. Expressed in these quantities, weak nonclassical steering corresponds to at least one of such variances being smaller than the vacuum value, whereas, for strong nonclassical steering, both of them have to be smaller. EPR-steerability amounts to asking that \emph{the product} of them is smaller than the value attained by the same quantity on the vacuum \cite{cav09}. More recently \cite{walsh20} the remote generation of Wigner negativity using bipartite Gaussian states as a starting point and one-photon subtraction instead of Gaussian measurements, has been discussed. It was found that, if the initial Gaussian state is described by a CM $\ssigma$ given in block form as in Eq.~(\ref{eq:blockform}), then the condition for remote generation of Wigner negativity on mode $B$ by one-photon subtraction on mode $A$ is:
\begin{equation}
\label{eq:wignerneg}
    \Tr [\ssigma_{A \vert B}] \ = \ \Tr [ \mathbf{A} - \mathbf{C}^{T} \mathbf{B}^{-1} \mathbf{C} ] \ < \ 1
\end{equation}
If the initial state is in canonical form, the above inequality reduces to:
\begin{equation}
\label{eq:wignernegcanon}
    \left( a - \frac{ c_{1}^{2}}{b} \right) + \left( a - \frac{ c_{2}^{2}}{b} \right) \ < \ 1
\end{equation}
which is clearly implied by the strong nonclassical steering inequality from mode $B$ to mode $A$ (\ref{eq:SNScanon}), but mind the reversal of the order with respect to the generation of Wigner negativity. Therefore we can assert that two-mode Gaussian states in canonical form that are strongly nonclassically steerable are also amenable to remote generation of Wigner negativity, a stronger, non-Gaussian form of nonclassicality \cite{kenf04,albpar18} that is believed to be a necessary resource in universal quantum computation with CV systems \cite{mari12,rah16}. Since Ineq.~(\ref{eq:wignerneg}) is not invariant under LGUTs, this conclusion cannot be directly generalized to all two-mode Gaussian states; however, in \cite{walsh20} the authors noted that, if one allows for passive unitary Gaussian transformations to act on mode $A$ before the one-photon subtraction, then EPR-steerability (and, a fortiori, SNS) becomes a sufficient condition for the remote generation of Wigner negativity with the most general bipartite Gaussian state as a starting point. This means that, from a resource viewpoint, \emph{all} strongly nonclassically steerable states of two modes are suited for remote Wigner negativity generation using one-photon subtraction.
\section{Conclusions}
Upon exploring how P-nonclassicality may be generated on one mode of 
a TMST state by Gaussian measurements on the other mode, we have 
introduced, and discussed in details, the concept of nonclassical steering (NS). We have characterized all conditional states generated in 
this fashion by using \emph{triangoloid plots} and we have deduced 
a necessary and sufficient condition for NS with TMST states, arising from the non-decreasing behaviour of the conditional nonclassicality with respect to the squeezing of the measurement. After discussing the necessity of entanglement for nonclassical steering with TMST states, and its asymmetric character, we put to use these results in the practical situation of a noisy propagation of a TWB state, for which we evaluated the maximum propagation time for NS. 
\par
We have also generalized NS to generic two-mode Gaussian states thanks to invariance under LGUTs, and two separate notions have emerged: 
weak and strong nonclassical steering. The first does not even imply entanglement, while the second implies EPR-steerability. We have 
also proved that nonclassical steering for TMST states \emph{is} 
EPR steering, a conclusion that may open the way for the use of nonclassicality in QKD.  
\begin{figure}[h!]
    \centering
    \includegraphics[width=0.8\columnwidth]{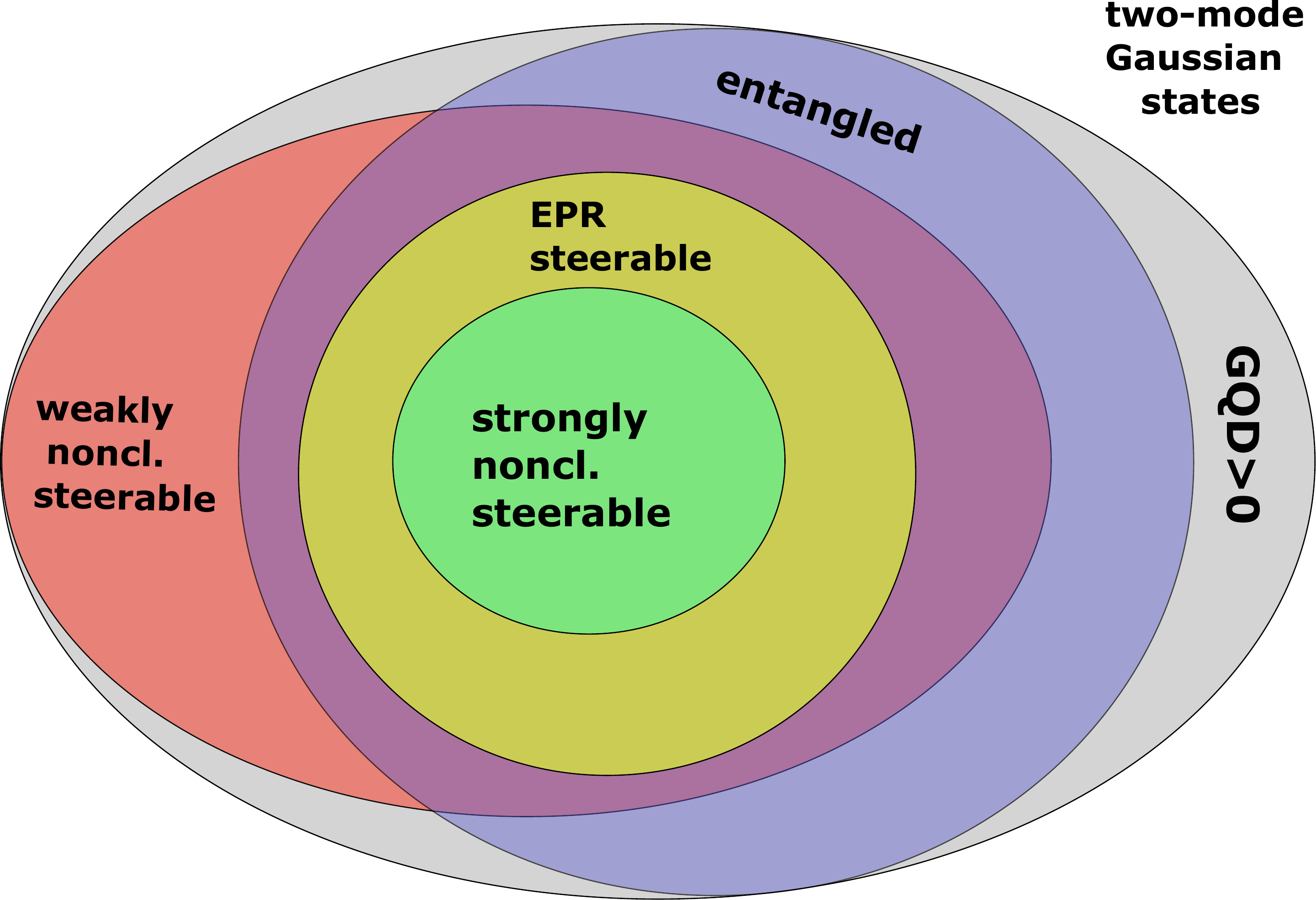}
    \caption{Quantum correlations for two-mode Gaussian states. 
    GQD stands for Gaussian Quantum Discord. The same ordering 
    between the modes must be adopted for all steering correlations 
    in the diagram, while the ordering for GQD can be arbitrary.}
    \label{fig:qcorrel}
\end{figure}
\par
The classification of quantum correlations for two-mode Gaussian 
states emerging from our results is summarized in the diagram of Fig.\ref{fig:qcorrel}. All the quantities, with the only exception of entanglement, are asymmetric: the diagram is thus valid as long as 
the same ordering of the modes is chosen for weak, strong and 
EPR steering, while the order of Gaussian Quantum Discord (GQD) 
can be chosen at will. The diagram makes it clear that the 
sets of entangled and weakly nonclassically steerable states 
intersect (light-purple region and everything inside it), 
but neither of them is a strict subset of the other. Moreover, 
they are both internally tangent to the boundary of states 
with positive GQD, because one can find sequences of states 
with arbitrarily small GQD in both sets. Notice, however, that 
the question about a possible lower bound on GQD for states 
that are \emph{both} entangled and WNS was not answered in 
our work and should \emph{not} be deduced from the diagram.
We also point out a possible analogy between the Gaussian 
steering triangoloids we have introduced and the quantum 
steering ellipsoids \cite{jev14,rudolph14} arising in the 
context of steering for a two-qubit system. Further explorations 
in this direction may be useful to better characterize the 
set of conditional states also in the CV setting.
\par
Overall, the results of our work suggests that 
the hierarchy of quantum correlations for two-mode Gaussian 
states is more involved than previously believed and, from
weakest to strongest, includes positive GQD, weak nonclassical 
steerability, entanglement, EPR-steerability, and strong 
nonclassical steerability.
\section*{Acknowledgements}
We thank C. Destri, R. Simon, A. Ferraro and M. Bondani for useful
discussions. MGAP is member of INdAM-GNFM.

\appendix
\section{Explicit counterexamples}
\subsection{Separable states allowing weak nonclassical steering}
\label{apx:entWNS}
A simple choice of parameters for the CM of a two-mode Gaussian state in canonical form, which is separable and WNS is:
\begin{equation} 
\begin{aligned}
& a \ = \ b \ = \ 13.9 \\
& c_{1} \ = \ 4.6 \ , \ \ \ \ \   c_{2} \ = \ -13.7
\end{aligned}
\end{equation}
It is simple to check that the corresponding Gaussian state is physical ($\ssigma > 0$ and fulfilling the UR). \\

There are also instances of physical states in canonical form with $c_{1} c_{2} > 0$, a notorious sufficient condition for separability, that are nevertheless WNS. For example: 
\begin{equation} 
\begin{aligned}
& a \ = \ b \ = \ 1.8 \\
& c_{1} \  = \  0.4 \ , \ \ \ \ \ c_{2} \ = \  1.6
\end{aligned}
\end{equation}

Furthermore, we explicitly constructed a counterexample in terms of a Williamson's decomposition \cite{will36},\cite{degoss06}:
\begin{subequations}
\begin{equation} \ssigma_{\text{swns}} \  =  \ \boldsymbol{\Sigma}^{(2)}_{R} \cdot  \mathbf{S}^{m}_{\phi} \cdot \left[  \boldsymbol{\Sigma}^{(1)}_{r} \oplus \boldsymbol{\Sigma}^{(1)}_{r}   \right] \cdot  \left[ \ssigma_{th} (\mu_{A})  \oplus  \ssigma_{th}(\mu_{B}) \right]  \cdot  \left[  \boldsymbol{\Sigma}^{(1)}_{r} \oplus \boldsymbol{\Sigma}^{(1)}_{r}   \right]^{T} \cdot \left( \mathbf{S}^{m}_{\phi} \right)^{T} \cdot \left( \boldsymbol{\Sigma}^{(2)}_{R}  \right)^{T} \end{equation}
\begin{equation} R \ = \ \ln{2}, \ \ \ \ \ \ \ \ \phi = \dfrac{\pi}{4}, \ \ \ \ \ \ \ \ r \ = \ \frac{1}{4} \ln{ \left( \dfrac{\mu_{A} + 16 \mu_{B}}{16 \mu_{A} + \mu_{B}}  \right) }
    \end{equation}
\end{subequations}
where:
\begin{subequations}
\begin{equation}
 \ssigma_{th} (\mu_{k}) \ =  \  \frac{1}{2 \mu_{k}} \Id_{2} \end{equation}
\begin{equation} \mathbf{S}^{m}_{\phi} \ \ = \  \left(\begin{array}{cc} \cos \phi \ \Id_{2} & \sin \phi \ \Id_{2} \\ 
- \sin \phi \ \Id_{2} & \cos \phi \ \Id_{2} \end{array} \right) \end{equation}
\begin{equation} \boldsymbol{\Sigma}^{(1)}_{r} \ = \ \diag \left( e^{2r}, e^{-2r} \right) \end{equation}
\begin{equation} \boldsymbol{\Sigma}^{(2)}_{R} \ \ = \  \ \left(      \begin{array}{cc}  \cosh R \cdot \Id_{2}  & \sinh R \cdot \sigma_{z} \\
    \sinh R \cdot \sigma_{z} & \cosh R \cdot \Id_{2} \end{array}    \right) \end{equation}
\end{subequations}
and $\sigma_{z} = \diag(1,-1)$ is the third Pauli matrix. In physical terms, $\ssigma_{th}(\mu_{k})$ is the CM of a single-mode thermal state, $\mathbf{S}^{m}_{\phi}$ performs a two-mode mixing (without cross-mixing of $x$'s and $p$'s quadratures), $\boldsymbol{\Sigma}^{(1)}_{r}$ implements single-mode squeezing at the level of phase-space and finally $\boldsymbol{\Sigma}^{(2)}_{R}$ introduces a two-mode squeezing. Notice that we choose $\phi = \frac{\pi}{4}$, so that the two-mode mixing is equivalent to the action of a balanced beam splitter, and we also took the same, real single-mode squeezing parameter $r$ for both modes. The resulting CM $\ssigma_{\text{swns}}$ corresponds necessarily to a physical state, because this decomposition implies that it could be prepared with modern optical equipment, at least in principle. One can check that, for the given choice of the parameters $r, R, \phi$, the matrix $\ssigma_{\text{swns}}$ is in canonical form. Moreover, for $\mu_{1}  = \frac{1}{32}$ and $\mu_{2} = \frac{1}{4}$, it describes a Gaussian state which is both separable and weakly nonclassically steerable.\\
\subsection{GQD and weak nonclassical steering}
\label{apx:GQDWNS}
Consider a sequence of Gaussian states in canonical form with the following parameters:
\begin{subequations}
\label{eq:zeroGQD}
\begin{equation}
     a_{n} \  = \  \frac{n+2}{2n+1} \ , \ \ \ \  \ b_{n} \ = \ n \end{equation}
\begin{equation} c_{1,n} \ = \  \frac{1}{\sqrt{2n}} \ , \ \ \ \ \ c_{2,n} \ =  \ - \sqrt{\frac{2n}{2n+1}} \end{equation}
\end{subequations}
for integers $n > 2$. By direct computation one can check that the corresponding CMs are positive and fulfilling the UR. They are also weakly nonclassically steerable, because $\vert c_{2,n} \vert > \vert c_{1,n} \vert$ and: 
\[  a_{n} - \frac{ c_{2,n}^{2}}{b_{n}} \ \ = \ \ \frac{n}{2n +1} \ < \ \frac{1}{2}         \]
The asymptotic values of the parameters as $n \to + \infty$ are:
\begin{equation}
\begin{aligned}
    & a_{n} \ \longrightarrow \ \frac{1}{2}^{+} \ , \ \ \ \ \ b_{n} \  \longrightarrow \ + \infty \\
    & c_{1,n} \ \longrightarrow \ 0^{+} \ , \ \ \ \ \ c_{2,n} \ \longrightarrow \ -1^{+}
\end{aligned}
\end{equation}
but these values have to be approached in the right way, given for example by Eq.~(\ref{eq:zeroGQD}), in order to respect the physical constraints for any finite $n$. As $n \to + \infty$, both the Gaussian Quantum Discords, $\mathcal{D}_{A \vert B}$ and $\mathcal{D}_{B \vert A}$, monotonically drop to zero, as we confirmed numerically. Therefore, weakly nonclassically steerable Gaussian states can have arbitrarily small GQDs \emph{in both directions}. \\
\subsection{An EPR-steerable but not strongly nonclassically steerable state}
\label{apx:EPRSNS}
\textcolor{red}{We will now show that it is possible for a two-mode Gaussian state in canonical form to be EPR-steerable without being strongly nonclassically steerable. We seek parameters $a,b,c_{1},c_{2}$ corresponding to a \emph{physical} state, for which the greatest factor on the left-hand side of Ineq.~(\ref{eq:EPRsteer}) is greater than $\frac{1}{2}$ (so that the state is \emph{not} strongly nonclassically steerable), while the other factor is small enough to ensure that the product is still smaller than $\frac{1}{4}$ (so that it \emph{is} EPR-steerable). An instance of such a state is provided by the following choice of parameters:
\begin{equation}
    \begin{aligned}
    & a \  = \ b \  = \  0.9 \\
    & c_{1} \ = \  0.55 \ \ , \ \ \ \ \ \ \ \ c_{2} \  = \ -0.7
    \end{aligned}
\end{equation}   }

\bibliography{ncstr5}

\begin{thebibliography}{10}
\providecommand{\url}[1]{\texttt{#1}}
\providecommand{\urlprefix}{URL }
\expandafter\ifx\csname urlstyle\endcsname\relax
  \providecommand{\doi}[1]{doi:\discretionary{}{}{}#1}\else
  \providecommand{\doi}{doi:\discretionary{}{}{}\begingroup
  \urlstyle{rm}\Url}\fi
\providecommand{\eprint}[2][]{\url{#2}}

\bibitem{EPR}
A.~Einstein, B.~Podolsky and N.~Rosen,
\newblock Phys.\ Rev. \textbf{47}, 777 (1935).

\bibitem{schr35}
E.~Schr{\"o}dinger,
\newblock Math. Proc. Cambridge Philos. Soc. \textbf{31}, 555 (1935).

\bibitem{ghir80}
G.~C. Ghirardi, A.~Rimini and T.~Weber,
\newblock Lett. Nuovo Cimento \textbf{27}, 293 (1980).

\bibitem{ghir83}
G.~C. Ghirardi and T.~Weber,
\newblock Nuovo Cimento B \textbf{78}, 9 (1983).

\bibitem{ghir88}
G.~C. Ghirardi, R.~Grassi, A.~Rimini and T.~Weber,
\newblock Europhys. Lett. \textbf{6}, 95 (1988).

\bibitem{eberh89}
P.~H. Eberhard and R.~R. Ross,
\newblock Found. of Phys. Lett. \textbf{2}, 127 (1989).

\bibitem{wern89}
R.~F. Werner,
\newblock Phys.\ Rev.\ A \textbf{40}, 4277 (1989).

\bibitem{bell}
J.~S. Bell,
\newblock Physics Physique Fizika \textbf{1}, 195 (1964).

\bibitem{CHSH}
J.~F. Clauser, M.~A. Horne, A.~Shimony and R.~A. Holt,
\newblock Phys.\ Rev.\ Lett. \textbf{23}, 880 (1969).

\bibitem{wise07}
H.~M. Wiseman, S.~J. Jones and A.~C. Doherty,
\newblock Phys.\ Rev.\ Lett. \textbf{98}, 140402 (2007).

\bibitem{jone07}
S.~J. Jones, H.~M. Wiseman and A.~C. Doherty,
\newblock Phys.\ Rev.\ A \textbf{76}, 052116 (2007).

\bibitem{bran12}
C.~Branciard, E.~G. Cavalcanti, S.~P. Walborn, V.~Scarani and H.~M. Wiseman,
\newblock Phys.\ Rev.\ A \textbf{85}, 010301(R) (2012).

\bibitem{heAd15}
Q.~He, L.~Rosales-Zárate, G.~Adesso and M.~D. Reid,
\newblock Phys.\ Rev.\ Lett. \textbf{115}, 180502 (2015).

\bibitem{gom15}
E.~S. Gómez, G.~Ca{\~ n}as, E.~Acu{\~ n}a, W.~A.~T. Nogueira and G.~Lima,
\newblock Phys.\ Rev.\ A \textbf{91}, 013801 (2015).

\bibitem{woll16}
S.~Wollmann, N.~Walk, A.~J. Bennet, H.~M. Wiseman and G.~J. Pryde,
\newblock Phys.\ Rev.\ Lett. \textbf{116}, 160403 (2016).

\bibitem{xia17}
Y.~Xiang, I.~Kogias, G.~Adesso and Q.~He,
\newblock Phys.\ Rev.\ A \textbf{95}, 010101(R) (2017).

\bibitem{deng17}
X.~Deng, Y.~Xiang, C.~Tian, G.~Adesso, Q.~He, Q.~Gong, X.~Su, C.~Xie and
  K.~Peng,
\newblock Phys.\ Rev.\ Lett. \textbf{118}, 230501 (2017).

\bibitem{kog07}
I.~Kogias, P.~Skrzypczyk, D.~Cavalcanti, A.~Acín and G.~Adesso,
\newblock Phys.\ Rev.\ Lett. \textbf{115}, 210401 (2015).

\bibitem{schn13}
J.~Schneeloch, P.~B. Dixon, G.~A. Howland, C.~J. Broadbent and J.~C. Howell,
\newblock Phys.\ Rev.\ Lett. \textbf{110}, 130407 (2013).

\bibitem{cwlee13}
C.-W. Lee, S.-W. Ji and H.~Nha,
\newblock J. Opt. Soc. of America B \textbf{30}, 2483 (2013).

\bibitem{ji16}
S.-W. Ji, J.~Lee, J.~Park and H.~Nha,
\newblock Scientific Reports \textbf{6}, 29729 (2016).

\bibitem{ferrpar12}
A.~Ferraro and M.~G.~A. Paris,
\newblock Phys.\ Rev.\ Lett. \textbf{108}, 260403 (2012).

\bibitem{glauber63}
R.~J. Glauber,
\newblock Phys.\ Rev. \textbf{131}, 2766 (1963).

\bibitem{sudarsh63}
E.~C.~G. Sudarshan,
\newblock Phys.\ Rev.\ Lett. \textbf{10}, 277 (1963).

\bibitem{cahillglauber69}
K.~Cahill and R.~J. Glauber,
\newblock Phys.\ Rev. \textbf{177}, 1857 (1969).

\bibitem{glauber69}
R.~J. Glauber,
\newblock Phys.\ Rev.\ A \textbf{131}, 1882 (1969).

\bibitem{mand86}
L.~Mandel,
\newblock Phys. Scr. \textbf{T12}, 34 (1986).

\bibitem{brau05}
S.~L. Braunstein,
\newblock Phys.\ Rev.\ A \textbf{71}, 055801 (2005).

\bibitem{alb16}
F.~Albarelli, A.~Ferraro, M.~Paternostro and M.~G.~A. Paris,
\newblock Phys.\ Rev.\ A \textbf{93}, 032112 (2016).

\bibitem{yadbind18}
B.~Yadin, F.~C. Binder, J.~Thompson, V.~Narasimhachar, M.~Gu and M.~Kim,
\newblock Phys.\ Rev.\ X \textbf{8}, 041038 (2018).

\bibitem{kwtanvolk19}
H.~Kwon, K.~C. Tan, T.~Volkoff and H.~Jeong,
\newblock Phys.\ Rev.\ Lett. \textbf{122}, 040503 (2019).

\bibitem{kjac20}
W.~Ge, K.~Jacobs, S.~Asiri, M.~Foss-Feig and M.~S. Zubairy,
\newblock Phys.\ Rev.\ Res. \textbf{2}, 023400 (2020).

\bibitem{FOPGauss}
A.~Ferraro, S.~Olivares and M.~G.~A. Paris,
\newblock \emph{Gaussian States in Quantum Information},
\newblock Bibliopolis (2005).

\bibitem{olivares}
S.~Olivares,
\newblock Eur. \ Phys. \ J. \ Special Topics \textbf{203}, 3 (2012).

\bibitem{sera07}
A.~Serafini,
\newblock J. Opt. Soc. of America B \textbf{24 (2)}, 347 (2007).

\bibitem{lee95}
H.-W. Lee,
\newblock Physics Reports \textbf{259 (3)}, 147 (1995).

\bibitem{lee91}
C.~T. Lee,
\newblock Phys.\ Rev.\ A \textbf{44}, R2775 (1991).

\bibitem{lutk95}
N.~Lutkenhaus and S.~M. Barnett,
\newblock Phys.\ Rev.\ A \textbf{51}, 3340 (1995).

\bibitem{daman18}
F.~Damanet, J.~Kübler, J.~Martin and D.~Braun,
\newblock Phys.\ Rev.\ A \textbf{97}, 023832 (2018).

\bibitem{hill85}
M.~Hillery,
\newblock Phys.\ Lett.\ A \textbf{111}, 8 (1985).

\bibitem{hill89}
M.~Hillery,
\newblock Phys.\ Rev.\ A \textbf{39}, 2994 (1989).

\bibitem{ctlee90}
C.~T. Lee,
\newblock Phys.\ Rev.\ A \textbf{42}, 1608 (1990).

\bibitem{vogelsperl14}
W.~Vogel and J.~Sperling,
\newblock Phys.\ Rev.\ A \textbf{89}, 052302 (2014).

\bibitem{pmar02}
P.~Marian, T.~A. Marian and H.~Scutaru,
\newblock Phys.\ Rev.\ Lett. \textbf{88}, 153601 (2002).

\bibitem{ESP02}
J.~Eisert, S.~Scheel and M.~B. Plenio,
\newblock Phys. \ Rev. \ Lett. \textbf{89}, 137903 (2002).

\bibitem{GC02}
G.~Giedke and J.~I. Cirac,
\newblock Phys.\ Rev.\ A \textbf{66}, 032316 (2002).

\bibitem{matrx}
R.~A. Horn and C.~R. Johnson,
\newblock \emph{Matrix Analysis},
\newblock Cambridge University Press (1985).

\bibitem{pmar93}
P.~Marian and T.~A. Marian,
\newblock Phys.\ Rev.\ A \textbf{47}, 4474 (1993).

\bibitem{pmar16}
P.~Marian and T.~A. Marian,
\newblock Phys.\ Rev.\ A \textbf{93}, 052330 (2016).

\bibitem{pmar03}
P.~Marian, T.~A. Marian and H.~Scutaru,
\newblock Phys.\ Rev.\ A \textbf{68}, 062309 (2003).

\bibitem{simon00}
R.~Simon,
\newblock Phys.\ Rev.\ Lett. \textbf{84}, 2726 (2000).

\bibitem{pmar01}
P.~Marian, T.~A. Marian and H.~Scutaru,
\newblock J.\ Phys.\ A: Mat. Gen. \textbf{34}, 35 (2001).

\bibitem{cola03}
M.~G.~A. Paris, M.~M. Cola and R.~Bonifacio,
\newblock J. Opt. B: Quantum Semiclass. Opt. \textbf{5}, 3 (2003).

\bibitem{qopt}
D.~F. Walls and G.~Milburn,
\newblock \emph{Quantum optics},
\newblock Springer, 2nd ed. (2008).

\bibitem{prauz04}
J.~S. Prauzner-Bechcicki,
\newblock J.\ Phys.\ A:\ Mat.\ Gen. \textbf{37}, 15 (2004).

\bibitem{pmar15}
P.~Marian, I.~Ghiu and T.~A. Marian,
\newblock Physica Scripta \textbf{90}, 7 (2015).

\bibitem{sera04}
A.~Serafini, F.~Illuminati and S.~D. Siena,
\newblock J. Phys. B: At. Mol. Opt. Phys. \textbf{37}, L21 (2004).

\bibitem{DGCZ00}
L.-M. Duan, G.~Giedke, J.~I. Cirac and P.~Zoller,
\newblock Phys.\ Rev.\ Lett. \textbf{84}, 2722 (2000).

\bibitem{zurek01}
H.~Ollivier and W.~H. Zurek,
\newblock Phys.\ Rev.\ Lett. \textbf{88}, 017901 (2001).

\bibitem{ABC16}
G.~Adesso, T.~R. Bromley and M.~Cianciaruso,
\newblock J.\ Phys.\ A:\ Math.\ Theor. \textbf{49}, 473001 (2016).

\bibitem{paris10}
P.~Giorda and M.~G.~A. Paris,
\newblock Phys.\ Rev.\ Lett. \textbf{105}, 020503 (2010).

\bibitem{ades10}
G.~Adesso and A.~Datta,
\newblock Phys.\ Rev.\ Lett. \textbf{105}, 030501 (2010).

\bibitem{HGR15}
Q.~He, Q.~Gong and M.~Reid,
\newblock Phys.\ Rev.\ Lett. \textbf{114}, 060402 (2015).

\bibitem{kogi15}
I.~Kogias, A.~R. Lee, S.~Ragy and G.~Adesso,
\newblock Phys.\ Rev.\ Lett. \textbf{114}, 060403 (2015).

\bibitem{gebh20}
V.~Gebhart and M.~Bohmann,
\newblock Phys.\ Rev.\ Res. \textbf{2}, 023150 (2020).

\bibitem{reid89}
M.~D. Reid,
\newblock Phys.\ Rev.\ A \textbf{40}, 913 (1989).

\bibitem{reid09}
M.~D. Reid, P.~D. Drummond, W.~P. Bowen, E.~G. Cavalcanti, P.~K. Lam, H.~A.
  Bachor, U.~L. Andersen and G.~Leuchs,
\newblock Rev.\ Mod.\ Phys. \textbf{81}, 1727 (2009).

\bibitem{ou92}
Z.~Y. Ou, S.~F. Pereira, H.~J. Kimble and K.~C. Peng,
\newblock Phys.\ Rev.\ Lett. \textbf{68}, 3663 (1992).

\bibitem{mid10}
S.~L.~W. Midgley, A.~J. Ferris and M.~K. Olsen,
\newblock Phys.\ Rev.\ A \textbf{81}, 022101 (2010).

\bibitem{kiuk17}
J.~Kiukas, C.~Budroni, R.~Uola and J.-P. Pellonp{\"a}{\"a},
\newblock Phys.\ Rev.\ A \textbf{96}, 042331 (2017).

\bibitem{cav09}
E.~G. Cavalcanti, S.~J. Jones, H.~M. Wiseman and M.~D. Reid,
\newblock Phys.\ Rev.\ A \textbf{80}, 032112 (2009).

\bibitem{walsh20}
M.~Walschaers and N.~Treps,
\newblock Phys.\ Rev.\ Lett. \textbf{124}, 150501 (2020).

\bibitem{kenf04}
A.~Kenfack and K.~Życzkowski,
\newblock J. Opt. B: Quantum Semiclass. Opt. \textbf{6}, 10 (2004).

\bibitem{albpar18}
F.~Albarelli, M.~G. Genoni, M.~G.~A. Paris and A.~Ferraro,
\newblock Phys.\ Rev.\ A \textbf{98}, 052350 (2018).

\bibitem{mari12}
A.~Mari and J.~Eisert,
\newblock Phys.\ Rev.\ Lett. \textbf{109}, 230503 (2012).

\bibitem{rah16}
S.~Rahimi-Keshari, T.~C. Ralph and C.~M. Caves,
\newblock Phys.\ Rev.\ X \textbf{6}, 021039 (2016).

\bibitem{jev14}
S.~Jevtic, M.~Pusey, D.~Jennings and T.~Rudolph,
\newblock Phys.\ Rev.\ Lett. \textbf{113}, 020402 (2014).

\bibitem{rudolph14}
A.~Milne, S.~Jevtic, D.~Jennings, H.~Wiseman and T.~Rudolph,
\newblock New J.\ Phys. \textbf{16} (2014).

\bibitem{will36}
J.~Williamson,
\newblock Am.\ J.\ Math. \textbf{58}, 141 (1936).

\bibitem{degoss06}
M.~de~Gosson,
\newblock \emph{Symplectic geometry and quantum mechanics},
\newblock Birkh\"{a}user, Basel (2006).

\end{thebibliography}
\end{document}